\DeclareMathAlphabet{\itbf}{OML}{cmm}{b}{it}
\def\EE{\mathbb{E}}
\def\RR{\mathbb{R}}
\def\eps{\varepsilon}
\newcommand{\ea}{\end{eqnarray}}  
\newcommand{\ba}{\begin{eqnarray}}  
\newcommand{\ee}{\end{equation}}  
\newcommand{\be}{\begin{equation}}  
\newcommand{\ean}{\end{eqnarray*}}  
\newcommand{\ban}{\begin{eqnarray*}}
\newcommand{\dessous}[2]{
\renewcommand{\arraystretch}{0.5} 
\begin{array}[t]{c}
{#1} \\
\scriptstyle
{#2}
\displaystyle
\end{array}
\renewcommand{\arraystretch}{1.0}
}
 \journalname{Annals of Finance}
\begin{document}  

  \title{Option Pricing under Fast-varying and \\  Rough Stochastic Volatility%\thanks{Grants or other notes
%about the article that should go on the front page should be
%placed here. General acknowledgments should be placed at the end of the article.}
}
%\subtitle{Do you have a subtitle?\\ If so, write it here}

\titlerunning{Fast-varying and Rough Stochastic Volatility}        % if too long for running head

\author{Josselin Garnier        \and
        Knut S\o lna %etc.
}

%\authorrunning{Short form of author list} % if too long for running head

\institute{J.  Garnier \at
                Centre de Math\'ematiques Appliqu\'ees, \\
Ecole Polytechnique, 91128 Palaiseau Cedex, 
France \\
%               Tel.: +123-45-678910\\
%              Fax: +123-45-678910\\
              \email{ josselin.garnier@polytechnique.edu}           %  \\
%             \emph{Present address:} of F. Author  %  if needed
           \and
           K.  Solna \at
 Department of Mathematics,  \\
University of California, Irvine CA 92697 \\
\email{ ksolna@math.uci.edu}
}

\date{Received: date / Accepted: date}
% The correct dates will be entered by the editor

\maketitle

  \begin{abstract} 
Recent  empirical studies suggest that 
the volatilities  associated with   financial time series  
exhibit short-range correlations. This entails that the volatility
process is very rough and its autocorrelation exhibits sharp decay
 at the origin. Another classic stylistic feature often assumed for the volatility is that
 it is mean reverting.  In this paper it is shown that  the price impact 
 of a rapidly mean reverting rough volatility model coincides with that associated with 
 fast  mean  reverting Markov  stochastic volatility models.   
 This reconciles  the empirical observation of rough volatility paths
 with the good fit of the implied volatility surface to models of fast mean reverting
 Markov  volatilities.
 Moreover, the result conforms with  recent numerical results 
 regarding rough stochastic volatility  models. 
It extends  the scope of models for which
 the asymptotic results of fast mean reverting Markov volatilities are valid. 
The paper concludes  with  a general discussion of fractional volatility asymptotics
 and their interrelation.  The regimes discussed there include fast and slow volatility 
 factors with  strong or  small volatility fluctuations and with
 the limits not commuting in general. 
  The notion of a characteristic term structure
 exponent is introduced, this exponent  governs the implied volatility
 term structure in the various asymptotic regimes. 
  \keywords{Stochastic volatility \and Short-range correlation \and Fractional 
Ornstein--Uhlenbeck process \and Hurst exponent \and Mean reversion}
 \subclass{91G80 \and 60H10  \and 60G22 \and 60K37}
\end{abstract}

\section{Introduction}

The assumption that the volatility
is constant, as in the standard Black--Scholes model,  is not realistic.
Indeed, practically, in order to match observed prices, one needs to use an implied volatility that depends on the 
pricing parameters.
Therefore, a consistent parameterization of the implied volatility is needed so that, 
after calibration of the implied volatility model to  liquid contracts,
it can be used for  pricing of less liquid contracts written on the same underlying.
Stochastic volatility models have been introduced because they give such consistent 
parameterizations of the implied volatility.  Here we will consider a specific class
of stochastic volatility models and identify the associated parameterization
of the price correction and  associated  implied volatility correction
that follow from the volatility fluctuations. 
 
 Empirical studies suggest that the volatility may 
exhibit  a ``multiscale'' character as in
\cite{bollerslev,breidt,viens1,cont0,cont,engle,oh}. That is,
correlations that decay  as a power law in  offset rather than as an
exponential function as in a Markov process.
 Recent empirical evidence in particular shows that stochastic volatility 
is often rough  with rapidly decaying correlations
 at the origin (see \cite{gatheral1}). 
In \cite{funahashi} it was shown numerically that the implied volatility correction
for a rough fractional  stochastic volatility model tends to the correction
associated with the Markov case in the regime of long time
to maturity. 
This is consistent with the analytic result derived in this paper where we consider a 
fast mean reverting rough  volatility. 
In this paper, using the martingale method, we get an analytical expression
for the option price and the corresponding implied volatility 
in the regime when the volatility process is fast mean reverting and rough.
The main conclusion is that the corrections to  the option prices and the
corresponding implied volatilities  have exactly the same forms as in the mixing case 
(when the stochastic volatility is Markov).

 A main technical aspect of our derivation
is a careful analysis of the form and properties of the covariation
between the Brownian motion and  the martingale process being the
{conditional square volatility shift}  over the time epoch of interest, see 
Eq. (\ref{def:Kt}) below.
It is   important in this context that we incorporate leverage in our  model so that
these processes are  correlated leading to a non-trivial covariation. 
Another  main aspect of our modeling is that we model the stochastic volatility
fluctuations as being stationary. In a number of recent works a model
for the volatility has been introduced where the initial time plays a special role
leading to a non-stationary process which is artificial from the modeling 
viewpoint. However, we show that in fact in the regime  of  fast mean reversion the asymptotic results of 
the non-stationary case coincide with those of the stationary case considered
here since the volatility process then  forgets its initial state.   
 % Fukasawa
 
  A central aspect of our analysis is the notion of time scales and time scale separation.
 It is then important to identify a reference time scale.    Here we will use the characteristic
 diffusion time
$  \bar\tau = {2}/{{\overline\sigma}^2}$,
 as the reference time, where  $\bar\sigma$ is the effective volatility, see Eqs. (\ref{def:stochmodel}) and (\ref{def:barsigma})  below.  
 Then  we consider a regime where the time to maturity and the characteristic diffusion time are of the same order of magnitude,  
 while the mean reversion time of the volatility fluctuations  is  small relative to  the 
 characteristic  diffusion time. 
 
Note that various asymptotic frameworks can be considered in the context of option pricing
 and we discuss some of them in this paragraph. 
 The choice of the  appropriate  asymptotic framework depends
 on the particular market and contract under consideration. 
 One may consider an asymptotic framework where the volatility fluctuations have small amplitude,
and the time to maturity, the characteristic diffusion time, and
 the mean reversion time for the volatility are of the same order of magnitude. 
 This approach is sensitive to the time dynamic aspects of the volatility. 
 Such an asymptotic situation  was considered for instance in \cite{fouque11} in the mixing
 case and recently in \cite{sv1} in the context of rough volatility factors. 
  However, this asymptotic situation  does not capture
 situations with large volatility fluctuations over the  time scale of the time to maturity.
In the asymptotic framework considered in this paper  the volatility 
 mean reversion time is small relative  to the time to maturity. In such a framework with
 long time horizon for the contract  we can also incorporate
 order one or strong volatility fluctuations over the time scale of the time to maturity. 
 Such an  asymptotic framework is also considered in \cite{25,fouque11}
  in the context of mixing processes.
 Note   that as an option  contract approaches maturity the sensitivity to   
  the payoff function is enhanced leading to important and interesting issues.  
  Indeed a number of recent works  consider implied volatility asymptotics 
 in a regime of short time to maturity,   see for instance \cite{viens3}  and references 
 therein.  The case with contracts that are moreover close to  the money are discussed in 
 \cite{bayer} and \cite{fukasawa15} for instance.   Asymptotics in the context of large strikes
 is discussed in \cite{rodger} for instance.
   We finally remark that for some special models, like the Heston model (see \cite{heston}) it is possible
 to get explicit or semi-explicit option price formulas in a context where volatility fluctuations have large amplitude
 and the mean reversion time is of the same order as the time to maturity.   
 
In this paragraph we discuss some special aspects of long- and short-range correlation properties
in various asymptotic regimes.  
In \cite{sv1} we consider  the situation where the multiscale  stochastic volatility has 
 small amplitude,  and we treat also the case where
 it has  slow variations, which is similar from the analytic viewpoint, 
with the latter case corresponding to a 
maturity that is small relatively  to the  mean reversion time for the volatility fluctuations. 
The corrections to the option prices and the
corresponding implied volatilities are identified there and the situation is
then qualitatively different from the one considered here in that the correction
to the price has a fractional behavior in time to maturity.
The characteristic  term structure exponent
 then reflects the roughness of the underlying  
volatility path, see Section \ref{sec:summ}  below.  
In \cite{sv2} we consider the case with stochastic volatility fluctuations
that are fast mean reverting  and that have a standard deviation of the same order as 
the mean, as we do here.
However, in \cite{sv2} the stochastic volatility
fluctuations have long-range correlation properties  so that the paths are smoother than those
associated with the Markov case.  Then the ``persistence''  of the volatility
fluctuations  leads to a fractional term structure and it also leads to a random component of
the price correction that is adapted to the filtration generated by  the price process
and  whose covariance structure can be  identified  in detail. 
 As we explain below
in our modeling the Hurst exponent $H$ parameterizes the smoothness of
the paths  with $H<1/2$ corresponding to short-range or rough paths
considered here 
and $H>1/2$ producing the long-range case. 
Indeed both
 regimes have been identified from the empirical perspective. We refer to  for instance
 \cite{gatheral1} for observations of rough volatility, 
 while  in  \cite{viens1} cases of volatility with  long-range  correlation properties are reported.
   Long-range volatility situations have    been reported 
for currencies in \cite{curr},  for commodities in \cite{energy} and for equity index in
\cite{mal}, while  analysis of electricity markets  data typically gives  $H<1/2$ as in
\cite{simonsen,rypdal,bennedsen}. 
  We  believe  that both the rough and the long-range cases   are important and can be observed
depending on the specific market and regime. 

  Taken together with the present paper the papers  \cite{sv1,sv2}
   give  a generalization
of the two-factor model of \cite{25,fouque11} to the case of 
processes with multiscale fluctuations, and for general smoothness
of the volatility factor, that has either short-range or long-range correlation properties.
Here we consider fractional  volatility in the context of option pricing, see
\cite{fH} and \cite{fH2} for applications to portfolio optimization.  
    
We also remark that,  albeit not treated in detail here,  model calibration
is an important issue for the practical use of the models.
An important aspect of the asymptotic 
results derived in our paper is   their use in calibration.
They provide a tool for robust calibration because they identify 
the parameters, the group market parameters,  that are important 
in affecting prices on contracts written on the  underlying.
Thus, the asymptotic results  can be used as a tool to avoid overfitting and noisy parameter 
estimates. Indeed,  the analysis of this paper shows that even at the level
of the correction the Hurst parameter does not affect the price.
Thus, for long dated contracts   the calibration  
scheme should not aim to identify the Hurst exponent.
 We also comment that the asymptotic results identify generic parameters,
   parameters that are common for different  contracts written on the same
 underlying.  The results presented here mean that the same  calibration
 scheme as that used in the Markovian case is appropriate.  
   Calibration  schemes for the Markovian case
 are considered  in for instance \cite{25,smile,fouque11}.
 The framework considered there is to calibrate parameters from 
 the implied volatility surface. We remark that estimates of volatility,
 spot volatility or proxies like VIX index  for instance,  also
 provide relevant information for calibration. Again here the asymptotic analysis
 provides an important tool because it identifies how basic aspects
 of the underlying affect the observables and how to connect information
 from for instance the  VIX index to pricing of financial contracts in the various
 asymptotic regimes.  Here our focus is on  fast mean  reverting rough volatility
 and how it affects pricing and thus implied volatility, but the tools presented can also be used
 to analyze issues associated with using historical data and for instance the VIX index
 for calibration.   
 
 One may ask why the Hurst exponent does not affect the implied volatility,
 as we will show below, even at the order of the correction.  The intuition
 for this is that the rough case with small Hurst exponent $H<1/2$ is mainly
 important on short time scales as it is the roughness, the rapid decay of correlations
 at the origin,  which primarily distinguishes the situation from the Markovian context.   
 The   correlations decay fast for large offsets  so that the correlation function
 for the rough volatility process  is integrable. 
 Thus, on time scales long relative to the mean reverting time of the 
rough volatility process, from a ``birds eye perspective'', 
 the roughness is not felt  and the process appears as a Markovian process.     
  This is in contrast to the long-range case with $H>1/2$ where the correlations 
 persist for a long time and the correlation function has heavy tails and is not integrable. 
 Then  the effect is felt also for long times. Indeed, we show in \cite{sv2} that
 with $H>1/2$,  and in the asymptotic framework considered there,  the Hurst
 exponent indeed has a strong impact on the form of the prices and hence
  the implied volatility. 
  In fact such a dichotomy    in terms of behavior and analytical approach
  depending on   $H\leq 1/2$ versus  $H>1/2$ can also 
  be observed in modeling of physical systems. See for instance the recent
  paper \cite{viens} which analyzes  the behavior of a Lyapunov coefficient  in the context of a driving fractional Brownian motion.

In Section \ref{sec:summ}  we summarize the form of the fractional term structure
exponent as it depends on the smoothness of volatility fluctuations,  
fluctuation magnitude, and the time  scale of mean reversion.
Otherwise the outline of the paper is as follows:
In Section \ref{sec:fou} we introduce the volatility factor in terms
of a fractional Ornstein--Uhlenbeck process and in Section 
\ref{sec:svm} the full stochastic volatility model. 
In Section \ref{sec:option} we present the main result
and its proof.  A number of technical lemmas that are  used in the proof
are presented in the appendices.   
 
\section{The Rapid Fractional Ornstein--Uhlenbeck Process}
\label{sec:fou}
We use a rapid fractional Ornstein--Uhlenbeck (fOU) process as the volatility factor and
describe here how this process can be represented in terms of a fractional Brownian motion.
Since fractional Brownian motion can be expressed in terms of ordinary Brownian motion
we also arrive at an expression for the rapid fOU process as a filtered version
of Brownian motion.

A fractional Brownian motion (fBM) is a zero-mean Gaussian process $(W^H_t)_{t\in \RR}$  
with the covariance
\begin{equation}
\label{eq:covfBM}
\EE[ W^H_t W^H_s ] = \frac{\sigma^2_H}{2} \big( |t|^{2H} + |s|^{2H} - |t-s|^{2H} \big) ,
\end{equation}
where $\sigma_H$ is a positive constant.

We use the following moving-average stochastic integral representation of the
fBM (see \cite{mandelbrot}):
\begin{equation}\label{mandelbrot}
W^H_t = \frac{1}{\Gamma(H+\frac{1}{2})} 
\int_{\RR} (t-s)_+^{H - \frac{1}{2}} -(-s)_+^{H - \frac{1}{2}} dW_s ,
\end{equation}
where $(W_t)_{t \in \RR}$ is a standard Brownian motion over $\RR$.
Then  indeed  $(W^H_t)_{t\in \RR}$ is a zero-mean Gaussian process with
the covariance (\ref{eq:covfBM}) and we have
\begin{eqnarray}
\nonumber
\sigma^2_H &=& \frac{1}{\Gamma(H+\frac{1}{2})^2} 
\Big[ \int_0^\infty \big( (1+s)^{H - \frac{1}{2}} -s^{H - \frac{1}{2}} \big)^ 2 ds +\frac{1}{2H}\Big] \\
&=&  \frac{1}{\Gamma(2H+1) \sin(\pi H)}.
\end{eqnarray}

We introduce the $\eps$-scaled fOU process as 
\begin{equation}\label{eq:fOU}
Z^\eps_t = \eps^{-H} \int_{-\infty}^t e^{-\frac{t-s}{\eps}} dW^H_s  = 
\eps^{-H}  W^H_t - \eps^{-1-H}  \int_{-\infty}^t e^{-\frac{t-s}{\eps}} W^H_s ds.
\end{equation}
Thus, the fOU process  is in fact a fractional Brownian motion with a restoring force towards zero.
It is a zero-mean, stationary Gaussian process,
with variance
\begin{equation}\label{eq:sou}
\EE [ (Z^\eps_t)^2 ] = \sigma^2_{{\rm ou}}, \mbox{ with } \sigma^2_{{\rm ou}} = \frac{1}{2} \Gamma(2H+1)  \sigma^2_H 
=\frac{1}{2\sin(\pi H)} ,
\end{equation}
that is independent of $\eps$,
and covariance:
\begin{eqnarray*}
\EE [ Z^\eps_t Z^\eps_{t+s}  ] &=&\sigma^2_{{\rm ou}} {\cal C}_Z\Big(\frac{s}{\eps}\Big) ,
\end{eqnarray*}
that is a function of $s/\eps$ only, with
\begin{eqnarray}
\nonumber
{\cal C}_Z(s) &=& 
 \frac{1}{\Gamma (2H+1)}
 \Big[ \frac{1}{2} \int_\RR e^{- |v|}| s+v|^{2H} dv - |s|^{2H}\Big]
\\
&=&
\frac{2 \sin (\pi H)}{\pi}
\int_0^\infty \cos (s x  ) \frac{x^{1-2H}}{1+x^2} dx .
\label{def:calCZ}
\end{eqnarray}
This shows that $\eps$ is the natural scale of variation of the fOU $Z^\eps_t$.
Note that the random process $Z^\eps_t$ is not  a martingale, nor  a Markov process.
For $H \in  (0,1/2)$ it possesses short-range correlation properties in the sense that its correlation
function is rough at zero:
\begin{equation}
\label{eq:corZG3}
{\cal C}_Z(s) =  1 - 
 \frac{1}{  \Gamma(2H+1)}  s^{2H}
+ o\big( s^{2H}\big) 
, \quad \quad s \ll 1,
\end{equation}
while it is integrable and it decays as $s^{2H-2}$ at infinity:
\begin{equation}
\label{eq:corZG3b}
{\cal C}_Z(s) =   
 \frac{1}{  \Gamma(2H-1)}  s^{2H-2}
+ o\big( s^{2H-2}\big) 
, \quad \quad s \gg 1.
\end{equation}

Using Eqs. (\ref{mandelbrot}) and (\ref{eq:fOU}) we arrive at
the moving-average integral representation of the scaled fOU as:
\begin{equation}
\label{eq:fOU2}
Z^\eps_t = \sigma_{{\rm ou}} \int_{-\infty}^t {\cal K}^\eps(t-s) dW_s,
\end{equation}
where 
\begin{equation}
\label{def:Keps}
{\cal K}^\eps(t) = \frac{1}{\sqrt{\eps}} {\cal K}\Big(\frac{t}{\eps}\Big),\quad \quad
{\cal K}(t) =\frac{1}{\sigma_{\rm ou}\Gamma(H+\frac{1}{2})} 
 \Big[ t^{H - \frac{1}{2}} - \int_0^t (t-s)^{H - \frac{1}{2}} e^{-s} ds \Big] .
\end{equation}
The main properties of the kernel ${\cal K}$ in 
our context are the following ones (valid for any $H \in (0,1/2)$):
\begin{enumerate}
\item[(i)]
${\cal K} \in L^2(0,\infty)$ with $\int_0^\infty {\cal K}^2(u) du = 1$
and ${\cal K} \in L^1(0,\infty)$.
\item[(ii)] For small times $t \ll 1$:
\begin{equation}
{\cal K} (t) = \frac{1}{\sigma_{\rm ou}\Gamma(H+\frac{1}{2})  } 
\Big( t^{H - \frac{1}{2}}  + O\big(  t^{H +\frac{1}{2}} \big) \Big) .
\end{equation}
\item[(iii)] For large times $t \gg 1$:
\begin{equation}
{\cal K} (t) = \frac{1}{\sigma_{\rm ou}\Gamma(H-\frac{1}{2})}  
\Big( t^{H - \frac{3}{2}}  +O\big(  t^{H - \frac{5}{2}} \big) \Big) .
\end{equation}
\end{enumerate}

{\bf Remark.}
The results presented in this paper can be generalized to any stochastic volatility model
of the form (\ref{def:stochmodel}) and (\ref{eq:fOU2}) provided 
${\cal K}^\eps(t) =   {\cal K} ({t}/{\eps}) / \sqrt{\eps}$ is such that the kernel ${\cal K}$
satisfies the properties (i)-(ii)-(iii) 
up to multiplicative constants.

\section{The Stochastic Volatility Model}
\label{sec:svm}%
The price of the risky asset follows the stochastic differential equation:
\begin{equation}
dX_t = \sigma_t^\eps X_t dW^*_t .
\end{equation}
The stochastic volatility is
\begin{equation}
\label{def:stochmodel}
\sigma_t^\eps = F(Z_t^\eps) ,
\end{equation}
where $Z_t^\eps$ is the scaled fOU with Hurst parameter $H \in (0,1/2)$ 
introduced in the previous section  which is  adapted to the Brownian motion $W_t$. Moreover,
 $W^*_t$ is a Brownian motion that is correlated to the stochastic volatility through
\begin{equation}\label{eq:corr} 
W^*_t = \rho W_t + \sqrt{1-\rho^2} B_t ,
\end{equation}
where the Brownian motion $B_t$ is independent of $W_t$.

The function $F$ is assumed to be one-to-one, positive-valued,
smooth, bounded and with bounded derivatives.
Accordingly, the filtration ${\cal F}_t $ generated
by $(B_t,W_t)$ is also the one generated by $X_t$.
Indeed, it is equivalent to the one generated by $(W^*_t,W_t)$, or $(W^*_t,Z^\eps_t)$.
Since $F$ is one-to-one, it is equivalent to the one generated by $(W^*_t,\sigma_t^\eps)$.
Since $F$ is positive-valued,
it is equivalent to the one generated by $(W^*_t,(\sigma_t^\eps)^2)$, or $X_t$. 

%%%%%%%%%%%%%%%%%%%%%

As we have discussed above, the volatility driving process  $Z_t^\eps$ has
short-range correlation properties. As we now show the volatility process
$\sigma_t^\eps$ inherits this property.

 \begin{lemma}
\label{prop:process}
We denote, for $j=1,2$:
\begin{equation}
\label{def:meanF}
\left< F^j \right> =  \int_{\RR}  F(\sigma_{{\rm ou}}z)^j p(z) dz,
\quad \quad 
\left< {F'}^j \right> =  \int_{\RR}  F'(\sigma_{{\rm ou}}z)^j p(z) dz,
\end{equation}
where $p(z)$ is the pdf of the standard normal distribution.

\begin{enumerate}
\item
The process $\sigma_t^\eps$ is a
stationary random process with mean $\EE[\sigma_t^\eps] = \left<F\right>$ and variance $
{\rm Var}(\sigma_t^\eps) = \left<F^2\right>-\left<F\right>^2$, independently of $\eps$.

\item
The covariance function of the process $\sigma_t^\eps$ is of the form
\begin{eqnarray}
\label{eq:corrY1}
{\rm Cov}\big( \sigma_t^\eps , \sigma_{t+s}^\eps \big) &=& \big( \left<F^2\right>-\left<F\right>^2\big) {\cal C}_\sigma \Big(\frac{s}{\eps}\Big),
\end{eqnarray}
where the correlation function ${\cal C}_\sigma$ satisfies ${\cal C}_\sigma(0)=1$ and 
\begin{eqnarray}
{\cal C}_\sigma(s) &=&
1 -  \frac{1   }{  \Gamma(2H+1)}
\frac{ \sigma_{{\rm ou}}^2\left< {F' }^2\right> }{ \left< F^2\right>-\left<F\right>^2}
s^{2H} +o\big( s^{2H}  \big)  ,
\quad \mbox{ for  } s \ll 1 ,
\label{eq:corrY12}\\
{\cal C}_\sigma(s) &=&
 \frac{1   }{  \Gamma(2H-1)}
\frac{ \sigma_{{\rm ou}}^2\left< F' \right>^2 }{ \left< F^2\right>-\left<F\right>^2}
s^{2H-2} +o\big( s^{2H-2}  \big)  ,
\quad \mbox{ for  } s \gg 1 .
\label{eq:corrY12b} 
\end{eqnarray}
\end{enumerate}
\end{lemma}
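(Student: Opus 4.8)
The two marginal statements in item~1 are immediate: the scaled fOU $Z_t^\eps$ is a zero-mean stationary Gaussian process whose variance equals $\sigma_{\rm ou}^2$ for every $\eps$ (Eq.~(\ref{eq:sou})), so the finite-dimensional distributions of $\sigma_t^\eps=F(Z_t^\eps)$ are stationary and, in particular, $\sigma_t^\eps\stackrel{d}{=}F(\sigma_{\rm ou}\xi)$ with $\xi\sim N(0,1)$; integrating against $p(z)$ gives $\EE[\sigma_t^\eps]=\left< F\right>$ and $\EE[(\sigma_t^\eps)^2]=\left< F^2\right>$, none of which depends on $\eps$. For item~2, the key observation is that $(Z_t^\eps,Z_{t+s}^\eps)$ is a centered Gaussian vector with marginal variances $\sigma_{\rm ou}^2$ and correlation $r:={\cal C}_Z(s/\eps)\in[-1,1]$ (cf.\ Eq.~(\ref{eq:sou}) and the line below it), so ${\rm Cov}(\sigma_t^\eps,\sigma_{t+s}^\eps)=g({\cal C}_Z(s/\eps))$, where $g(r):={\rm Cov}\!\big(F(\sigma_{\rm ou}U),F(\sigma_{\rm ou}V)\big)$ and $(U,V)$ is standard bivariate normal with correlation $r$. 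This already yields the structural form (\ref{eq:corrY1}) with ${\cal C}_\sigma=g\circ{\cal C}_Z/(\left< F^2\right>-\left< F\right>^2)$, and ${\cal C}_\sigma(0)=1$ since $g(1)={\rm Var}(\sigma_t^\eps)$. It remains to extract the behaviour of $g$ near $r=1$ and near $r=0$.

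To do this I would use the Hermite--Mehler expansion. Since $F$ is bounded, $f(z):=F(\sigma_{\rm ou}z)$ lies in $L^2$ of the standard Gaussian; write $f=\sum_{k\ge0}a_k h_k$ in the Hermite basis normalized by $\EE[h_j(\xi)h_k(\xi)]=k!\,\delta_{jk}$. Mehler's formula then gives $g(r)=\sum_{k\ge1}c_k r^k$ with $c_k:=k!\,a_k^2\ge0$, and three identities are needed: (a) $\sum_{k\ge1}c_k=g(1)=\left< F^2\right>-\left< F\right>^2$; (b) $c_1=a_1^2=\big(\EE[\xi F(\sigma_{\rm ou}\xi)]\big)^2=\sigma_{\rm ou}^2\left< F'\right>^2$ by Gaussian integration by parts; (c) $\sum_{k\ge1}k\,c_k=\EE[f'(\xi)^2]=\sigma_{\rm ou}^2\left<{F'}^2\right><\infty$, using $h_k'=k\,h_{k-1}$ and the boundedness of $F'$.

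For $s/\eps\to\infty$ one has $r={\cal C}_Z(s/\eps)\to0$ by (\ref{eq:corZG3b}), and $g(r)=c_1 r+O(r^2)$ because the remainder is bounded by $r^2\sum_k c_k$; combining with (\ref{eq:corZG3b}) and dividing by ${\rm Var}(\sigma_t^\eps)$ gives (\ref{eq:corrY12b}). For $s/\eps\to0$ one has $r={\cal C}_Z(s/\eps)\to1^-$, so the linear term is not enough; instead write ${\rm Var}(\sigma_t^\eps)-g(r)=\sum_{k\ge1}c_k(1-r^k)$. Since $0\le(1-r^k)/(1-r)\le k$ for $r\in[0,1)$ and $(1-r^k)/(1-r)\to k$, identity (c) lets dominated convergence give $\lim_{r\to1^-}\big({\rm Var}(\sigma_t^\eps)-g(r)\big)/(1-r)=\sum_k k\,c_k=\sigma_{\rm ou}^2\left<{F'}^2\right>$. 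Feeding in $1-{\cal C}_Z(s/\eps)=(s/\eps)^{2H}/\Gamma(2H+1)+o((s/\eps)^{2H})$ from (\ref{eq:corZG3}) and dividing by ${\rm Var}(\sigma_t^\eps)$ yields (\ref{eq:corrY12}).

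The main obstacle is exactly this $r\to1^-$ regime: one must control the full slope $g'(1^-)=\sum_k k\,c_k$, equivalently show that $f$ belongs to the first Gaussian Sobolev space, in order to legitimately replace ${\rm Var}(\sigma_t^\eps)-g(r)$ by $\sigma_{\rm ou}^2\left<{F'}^2\right>(1-r)+o(1-r)$; this is the step that transfers the roughness exponent $2H$ of ${\cal C}_Z$ to ${\cal C}_\sigma$ unchanged, with $\left<{F'}^2\right>$ replacing the constant $1$. An equivalent route avoiding Hermite polynomials is to write $V=rU+\sqrt{1-r^2}\,W$ with $W$ independent of $U$, expand $F$ to second order about $\sigma_{\rm ou}U$, and use boundedness of $F''$ to bound the remainder by $O((1-r)^{3/2})=o((s/\eps)^{2H})$; this again requires some care with the conditional moments of $(V-U)^2$ given $U$.
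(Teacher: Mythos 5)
Your proposal is correct, and it reaches the two asymptotic regimes of ${\cal C}_\sigma$ by a genuinely different route than the paper. The paper works directly with the Gaussian double integral: it writes ${\rm Cov}(\sigma_t^\eps,\sigma_{t+s}^\eps)=\Psi({\cal C}_Z(s/\eps))$ and expands $\Psi(C)$ near $C=1$ via the rotation $z_{1,2}=z\sqrt{(1+C)/2}\pm\zeta\sqrt{(1-C)/2}$ (producing the $\left<{F'}^2\right>(1-C)$ term, with an $O((1-C)^2)$ remainder that uses the bounded higher derivatives of $F$), and near $C=0$ by Taylor-expanding the exponent (producing the $\left<F'\right>^2 C$ term after Gaussian integration by parts). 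You instead diagonalize once and for all with the Hermite--Mehler expansion $g(r)=\sum_{k\ge 1}c_k r^k$, $c_k=k!\,a_k^2\ge 0$, and read off both regimes from the three identities $\sum c_k=\left<F^2\right>-\left<F\right>^2$, $c_1=\sigma_{\rm ou}^2\left<F'\right>^2$, $\sum k c_k=\sigma_{\rm ou}^2\left<{F'}^2\right>$; the positivity of the $c_k$ makes the dominated-convergence step at $r\to 1^-$ (via $0\le(1-r^k)/(1-r)\le k$) and the crude bound $|\sum_{k\ge2}c_kr^k|\le r^2\sum c_k$ at $r\to 0$ entirely painless. What each approach buys: the paper's change of variables gives a sharper remainder $O((1-C)^2)$ at $C=1$ (more than the stated $o(s^{2H})$ requires) but implicitly leans on more smoothness of $F$; your chaos expansion needs only $F$ bounded and $F'$ bounded (so that $f$ lies in the first Gaussian Sobolev space), delivers exactly the $o(1-r)$ needed, and makes transparent why the exponent $2H$ of ${\cal C}_Z$ passes to ${\cal C}_\sigma$ unchanged. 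The alternative you sketch in your last sentence ($V=rU+\sqrt{1-r^2}\,W$ with a second-order Taylor expansion of $F$) is essentially the paper's argument in probabilistic clothing. One small point to keep in mind if you write this up: ${\cal C}_Z(s)$ is negative for large $s$ (since $\Gamma(2H-1)<0$ for $H\in(0,1/2)$), so your $r\to 0$ bound should be stated for $|r|$ small, which your estimate $r^2\sum c_k$ already accommodates.
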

Consequently,  the process $\sigma_t^\eps$ has short-range correlation properties
and its covariance function is integrable.

\begin{proof}
  The fact that $\sigma_t^\eps$ is a stationary random process with mean
  $\left< F \right>$ is straightforward in view of the definition (\ref{def:stochmodel})
  of $\sigma_t^\eps$.  
 
For any $t,s$, the vector $\sigma_{{\rm ou}}^{-1} (Z^\eps_t ,Z^\eps_{t+s})$ is a Gaussian
random vector with mean $(0,0)$ and $2\times 2$ covariance matrix:
$$
{\bf C}^\eps =  \begin{pmatrix}
1 &{\cal C}_Z(s/\eps) \\
{\cal C}_Z(s/\eps) & 1 \end{pmatrix}.
$$
Therefore, denoting ${F}_c(z) = F(\sigma_{{\rm ou}}z) - \left< F \right>$,
the covariance function of the process $\sigma_t^\eps$ is
\begin{align*}
&{\rm Cov} (\sigma_t^\eps , \sigma_{t+s}^\eps) 
=  \EE \big[{F}_c(\sigma_{{\rm ou}}^{-1} Z_t^\eps)  {F}_c(\sigma_{{\rm ou}}^{-1} Z_{t+s}^\eps) \big]
\\
&=
\frac{1}{2 \pi  \sqrt{ \det {\bf C}^\eps}}
\iint_{\RR^2} {F}_c(z_1) {F}_c(z_2) 
\exp\Big( -\frac{1}{2}
\begin{pmatrix} z_1\\z_2\end{pmatrix}^T  
{{\bf C}^\eps}^{-1}
\begin{pmatrix} z_1\\z_2 \end{pmatrix} \Big) dz_1 dz_2 
\\
&= 
\Psi\Big( {\cal C}_Z \Big(\frac{s}{\eps}\Big)\Big) ,
\end{align*}
with
$$
\Psi(C) = 
\frac{1}{2 \pi \sqrt{1- C^2} }
\iint_{\RR^2}  {F}_c(z_1) {F}_c(z_2) \exp
\Big( - \frac{z_1^2+z_2^2 -2  C  z_1 z_2}{2 (1 -
C^2 )} \Big) dz_1 dz_2 \, .
$$
This shows that ${\rm Cov} (\sigma_t^\eps , \sigma_{t+s}^\eps) $ is a function of $s/\eps$ only.\\
The function $\Psi$ can be expanded in powers of $1-C$ for $C$ close to one:
\begin{eqnarray*}
\Psi(C) &=& 
\frac{1}{2 \pi }
\iint_{\RR^2} F_c\Big(z \frac{\sqrt{1+C}}{\sqrt{2}} +\zeta\frac{\sqrt{1-C}}{\sqrt{2}}\Big)F_c\Big(z\frac{\sqrt{1+C}}{\sqrt{2}}  -\zeta\frac{\sqrt{1-C}}{\sqrt{2}}\Big) \\
&&\times
\exp\Big( - \frac{z^2}{2} - \frac{\zeta^2}{2}\Big) dz d\zeta 
\\
&=&\frac{1}{\sqrt{2\pi}  }
\int_{\RR}  {F}_c(z)^2\exp
\Big( - \frac{z^2 }{2  }\Big) dz 
\\
&&+ (1-C) 
\frac{1}{\sqrt{2\pi} }
\int_{\RR}  {F}_c'(z)^2  \exp
\Big( - \frac{z^2 }{2 }\Big) dz  +\dessous{O}{C\to 1} ((1-C)^2), % \quad \quad 1-C \ll 1,
\end{eqnarray*}
which gives with  (\ref{eq:corZG3})  the form  (\ref{eq:corrY12}) 
of the 
correlation function for  $\sigma_t^\eps$.\\
The function $\Psi$ can be expanded in powers of $C$ for small $C$:  
\begin{eqnarray*}
\Psi(C) &=& 
\frac{1}{2\pi  }
\iint_{\RR^2}  {F}_c(z_1) {F}_c(z_2) \exp
\Big( - \frac{z_1^2+z_2^2 }{2  }\Big) dz_1 dz_2
\\
& &   -
\frac{C}{2\pi }
\iint_{\RR^2}  z_1 z_2 {F}_c(z_1) {F}_c(z_2) \exp
\Big( - \frac{z_1^2+z_2^2 }{2 }\Big) dz_1 dz_2 +
\dessous{O}{C\to 0} 
(C^2),  %\quad \quad C \ll 1,
\end{eqnarray*} 
which gives with (\ref{eq:corZG3b})  the form (\ref{eq:corrY12b}) 
of the correlation function for  $\sigma_t^\eps$.
\end{proof}

\section{The Option Price}
\label{sec:option}%
We aim at computing the option price defined as the martingale
\begin{equation}
M_t =\EE\big[ h(X_T) |{\cal F}_t \big]  ,
\end{equation}
where $h$ is a smooth payoff function and $t \leq T$. 
In fact weaker assumptions are possible for $h$,
as we only need to control the function $Q^{(0)}_t(x)$ defined below rather than $h$,
as is discussed in  \cite[Section 4]{sv1}.

We introduce the operator 
\begin{equation}
{\cal L}_{\rm BS} (\sigma) = \partial_t +\frac{1}{2} \sigma^2 x^2 \partial_x^2 ,
\end{equation}
that is, the standard Black--Scholes operator at zero interest rate and
constant volatility $\sigma$.   

We next exploit the fact that the price process is a
martingale to obtain an approximation, via constructing an explicit
function  $Q_t^\eps(x)$ so that $Q_T^\eps(x)=h(x)$ and so that $Q_t^\eps(X_t)$ 
is a martingale to first-order corrected  terms. Then, indeed $Q_t^\eps(X_t)$ 
gives the approximation  for the option price $M_t$ to this order.
   
The following proposition gives the first-order correction to the 
expression for  the martingale $M_t$ in the regime where $\eps$ is small.
\begin{proposition}\label{prop:main}
We have
\begin{equation}
\lim_{\eps \to 0}  \eps^{-1/2} \sup_{t\in [0,T]}   \EE\left[ | M_t  - Q_t^\eps(X_t) |^2 \right]^{1/2}  = 0  ,
\end{equation} 
%When $\eps$ is small,
%we have
%\begin{equation}
%M_t = Q_t^\eps(X_t) +o(\eps^{1/2}) ,
%\end{equation}
where
\begin{equation}
\label{def:Qt}
Q_t^\eps(x) = Q_t^{(0)}(x) +\eps^{1/2}   \rho Q_t^{(1)}(x)   ,
\end{equation}
$Q_t^{(0)}(x) $ is deterministic and given by the Black--Scholes formula with constant volatility~$\overline{\sigma}$,
\begin{equation}
\label{eq:bs0}
{\cal L}_{\rm BS} (\overline\sigma) Q_t^{(0)}(x) =0,   \quad \quad Q_T^{(0)}(x) = h(x),
\end{equation}
with
\begin{equation} 
\label{def:barsigma}
\overline{\sigma}^2 = \left< F^2\right> = \int_\RR F(\sigma_{{\rm ou}} z )^2 p(z) dz,
\end{equation}
$p(z)$ is the pdf of the standard normal distribution,\\
$Q_t^{(1)}(x) $ is the deterministic correction
\begin{equation}
\label{def:Q1t}
Q_t^{(1)}(x) =(T-t)\overline{D}  \big( x \partial_x  (x^2 \partial_x^2 ) \big) Q_t^{(0)} (x)   ,
\end{equation}
 $\overline{D}$ is the coefficient defined by 
\begin{equation}
\overline{D} =   \sigma_{{\rm ou}} \int_0^\infty
\Big[\iint_{\RR^2} F(\sigma_{{\rm ou}} z )  (FF')(\sigma_{{\rm ou}} z') p_{{\cal C}_Z(s)} (z,z') d z dz'\Big]
 {\cal K}(s) ds,
\label{def:DtT}
\end{equation}
$p_C(z,z')$ is the pdf of the bivariate normal distribution with mean zero and covariance matrix $\begin{pmatrix}
1 & C\\ C & 1
\end{pmatrix}$,
and ${\cal C}_Z(s)$ is given by (\ref{def:calCZ}).
\end{proposition}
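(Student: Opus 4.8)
The plan is to use the martingale method of \cite{sv1,sv2}. Since $M_t$ is a martingale and $M_T=h(X_T)=Q_T^\eps(X_T)$ --- the factor $T-t$ makes $Q_t^{(1)}$ vanish at maturity --- It\^o's formula applied to $s\mapsto Q_s^\eps(X_s)$ together with the tower property gives the exact identity $M_t-Q_t^\eps(X_t)=\EE[\int_t^T {\cal L}_{\rm BS}(\sigma_s^\eps)Q_s^\eps(X_s)\,ds\,|\,{\cal F}_t]$, the martingale part of $dQ_s^\eps(X_s)$ dropping out under the conditional expectation. Setting $g_s(x):=x^2\partial_x^2 Q_s^{(0)}(x)$ and writing ${\cal L}_{\rm BS}(\sigma)={\cal L}_{\rm BS}(\overline{\sigma})+\frac12(\sigma^2-\overline{\sigma}^2)x^2\partial_x^2$, the identity ${\cal L}_{\rm BS}(\overline{\sigma})Q_s^{(0)}=0$ yields ${\cal L}_{\rm BS}(\sigma_s^\eps)Q_s^{(0)}(X_s)=\frac12((\sigma_s^\eps)^2-\overline{\sigma}^2)g_s(X_s)$. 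The operators $x\partial_x$ and $x^2\partial_x^2$ commute with one another and with $\partial_t$, hence with ${\cal L}_{\rm BS}(\overline{\sigma})$; combined with ${\cal L}_{\rm BS}(\overline{\sigma})Q_s^{(0)}=0$ this gives both ${\cal L}_{\rm BS}(\overline{\sigma})g_s=0$ and ${\cal L}_{\rm BS}(\overline{\sigma})Q_s^{(1)}=-\overline{D}\,x\partial_x g_s$. The remaining terms in ${\cal L}_{\rm BS}(\sigma_s^\eps)Q_s^\eps$ carry a prefactor $\eps^{1/2}((\sigma_s^\eps)^2-\overline{\sigma}^2)$ and are $O(\eps)$ after conditional integration, because $\int_t^T\big((\sigma_s^\eps)^2-\overline{\sigma}^2\big)\Phi_s\,ds=O(\eps^{1/2})$ in $L^2$, uniformly on $[0,T]$, for any slowly varying bounded $\Phi$, by the integrability of the square-volatility correlation of Lemma~\ref{prop:process}. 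Hence, up to $o(\eps^{1/2})$ uniformly in $t$,
\begin{equation}
M_t-Q_t^\eps(X_t)=\tfrac12\EE\Big[\int_t^T\big((\sigma_s^\eps)^2-\overline{\sigma}^2\big)g_s(X_s)\,ds\,\Big|\,{\cal F}_t\Big]-\eps^{1/2}\rho\,\overline{D}\,\EE\Big[\int_t^T X_s\partial_x g_s(X_s)\,ds\,\Big|\,{\cal F}_t\Big] .
\nonumber
\end{equation}

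The core of the proof is to show that the first term equals $+\eps^{1/2}\rho\,\overline{D}\,\EE[\int_t^T X_s\partial_x g_s(X_s)\,ds\,|\,{\cal F}_t]+o(\eps^{1/2})$, so that the two terms cancel. I would introduce the martingale $K_t$ of Eq.~(\ref{def:Kt}), a normalization of $\EE[\int_0^T(\sigma_u^\eps)^2\,du\,|\,{\cal F}_t]$; since $\int_0^T(\sigma_u^\eps)^2\,du$ is measurable with respect to the Brownian motion $W_t$, it has the representation $dK_t=\Theta_t^\eps\,dW_t$. A forgetting estimate --- conditioning on ${\cal F}_s$ splits $Z_u^\eps$ into an ${\cal F}_s$-measurable component of variance tending to $0$ and an independent component of variance tending to $\sigma_{{\rm ou}}^2$, so that $\EE[(\sigma_u^\eps)^2\,|\,{\cal F}_s]$ differs from $\overline{\sigma}^2$ only over a boundary layer $u-s=O(\eps)$ up to an integrable tail --- gives $\int_t^s\big((\sigma_u^\eps)^2-\overline{\sigma}^2\big)du=K_s-K_t+O(\eps)$ in $L^2$ uniformly. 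Integrating by parts in $s$ (the finite-variation process $\int_t^{\,\cdot}((\sigma^\eps_u)^2-\overline{\sigma}^2)du$ has no covariation with $g_s(X_s)$), using that ${\cal L}_{\rm BS}(\overline{\sigma})g_s=0$ removes the main part of the resulting drift and leaves an $O(\eps)$ remainder controlled by the integrability of the square-volatility correlation, and that the It\^o integral against $dW^*_s$ has zero conditional mean, one reduces the first term to $\frac12\EE[g_T(X_T)(K_T-K_t)\,|\,{\cal F}_t]+O(\eps)$. Decomposing the martingale $N_s:=\EE[g_T(X_T)\,|\,{\cal F}_s]$ as $dN_s=\alpha_s\,dW_s+\beta_s\,dB_s$ and using $W\perp B$, this equals $\frac12\EE[\int_t^T\alpha_s\Theta_s^\eps\,ds\,|\,{\cal F}_t]$ --- the covariation between the Brownian motion $W$ and the square-volatility martingale $K$ that is the engine of the correction, and which requires $\rho\neq 0$ to be non-trivial through $\alpha_s$.

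It remains to identify $\alpha_s\Theta_s^\eps$ to leading order. Since ${\cal L}_{\rm BS}(\overline{\sigma})g_s=0$ and $g_T(x)=x^2h''(x)$ is the terminal value of $N$, the zeroth-order approximation of $N_s$ is $g_s(X_s)$ --- this is the leading-order case of the present statement applied to the modified payoff $x^2h''$, valid in the sense of \cite[Section~4]{sv1} --- so matching martingale parts of $dN_s$ gives $\alpha_s=\rho\,\sigma_s^\eps X_s\partial_x g_s(X_s)+\gamma_s$ with $\EE[\int_t^T\gamma_s^2\,ds]=o(1)$, whence $\EE[\int_t^T\gamma_s\Theta_s^\eps\,ds\,|\,{\cal F}_t]=o(\eps^{1/2})$ because $\Theta^\eps$ is of size $\eps^{1/2}$. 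A Clark--Ocone computation using $D_s Z_u^\eps=\sigma_{{\rm ou}}{\cal K}^\eps(u-s)$ gives $\Theta_s^\eps=2\sigma_{{\rm ou}}\int_s^T{\cal K}^\eps(u-s)\EE[(FF')(Z_u^\eps)\,|\,{\cal F}_s]\,du$, which is $O(\eps^{1/2})$ since $F,F'$ are bounded and ${\cal K}\in L^1(0,\infty)$; using the bivariate Gaussian law of $(Z_s^\eps,Z_{s+\eps v}^\eps)$, of correlation ${\cal C}_Z(v)$, and the change of variables $v=(u-s)/\eps$, one finds $\EE[\sigma_s^\eps\Theta_s^\eps]=2\eps^{1/2}\overline{D}+o(\eps^{1/2})$ with $\overline{D}$ exactly as in (\ref{def:DtT}). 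Because $X_s\partial_x g_s(X_s)$ varies on the $O(1)$ scale while the fluctuations of $\sigma_s^\eps\Theta_s^\eps$ about $2\eps^{1/2}\overline{D}$ are $O(\eps^{1/2})$ and decorrelate on the $\eps$ scale --- here the short-range, integrable correlation structure of Lemma~\ref{prop:process} (equivalently ${\cal K}\in L^1$) is essential --- an averaging argument yields $\EE[\int_t^T\alpha_s\Theta_s^\eps\,ds\,|\,{\cal F}_t]=2\eps^{1/2}\rho\,\overline{D}\,\EE[\int_t^T X_s\partial_x g_s(X_s)\,ds\,|\,{\cal F}_t]+o(\eps^{1/2})$. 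Substituting this back, the two $O(\eps^{1/2})$ contributions cancel and $\eps^{-1/2}\sup_{t\in[0,T]}\EE[|M_t-Q_t^\eps(X_t)|^2]^{1/2}\to 0$; the uniform moment bounds used throughout hold because $\sigma_t^\eps=F(Z_t^\eps)$ is bounded, so $X_t$ and its relevant powers have bounded moments.

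The step I expect to be the main obstacle is the analysis of $\Theta_s^\eps$, i.e.\ of the density of the covariation $\langle W,K\rangle$: showing it is of size $\eps^{1/2}$, computing its effective value $2\eps^{1/2}\overline{D}$, and justifying that, weighted by the slowly varying factor $X_s\partial_x g_s(X_s)$ and conditionally integrated, $\sigma_s^\eps\Theta_s^\eps$ may be replaced by that constant. This is precisely where the roughness exponent $H\in(0,1/2)$ enters only through the integrable kernel ${\cal K}$ and correlation ${\cal C}_Z$ and never through $H$ itself, which is the mechanism behind the fact that $Q_t^{(1)}$, and hence the implied volatility correction, takes the same form as in the Markovian case. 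The auxiliary error sources --- the boundary-layer/forgetting estimates and the replacement of $N_s$ by $g_s(X_s)$ at order $\eps^0$ --- are comparatively routine and are where the technical lemmas of the appendices are invoked.
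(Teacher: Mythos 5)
Your strategy is sound and contains every structural ingredient of the paper's argument --- the square-volatility martingale $\psi^\eps_t$ of (\ref{def:Kt}), the density $\vartheta^\eps_t$ of its covariation with $W$ (your $\Theta^\eps_s=2\vartheta^\eps_s$), the identification $\EE[\sigma^\eps_s\vartheta^\eps_s]=\eps^{1/2}\overline{D}+o(\eps^{1/2})$ with $\overline{D}$ as in (\ref{def:DtT}), and the decorrelation estimates that let you replace $\sigma^\eps_s\vartheta^\eps_s$ by its mean against a slowly varying weight (Lemma \ref{lem:2} and the discretization argument) --- but the organization is genuinely different. The paper never passes through $\EE[g_T(X_T)(K_T-K_t)\,|\,{\cal F}_t]$ or the martingale representation of $N_s=\EE[g_T(X_T)\,|\,{\cal F}_s]$: it applies It\^o's product rule to $\phi^\eps_t\,(x^2\partial_x^2)Q^{(0)}_t(X_t)$, with $\phi^\eps_t$ the conditional \emph{future} fluctuation (\ref{def:phit}), which produces the covariation term $\rho\,(x\partial_x(x^2\partial_x^2))Q^{(0)}_t(X_t)\,\sigma^\eps_t\vartheta^\eps_t\,dt$ directly and reduces everything to the three explicit remainders $R^{(1)},R^{(2)},R^{(3)}$.

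Two of your error estimates are imprecise in a way the paper's decomposition avoids, and you should repair them. First, your integration-by-parts remainder $\EE[\int_t^T A_s\,d(g_s(X_s))\,|\,{\cal F}_t]$ with $A_s=\int_t^s((\sigma^\eps_u)^2-\overline{\sigma}^2)du$ is not $O(\eps)$: by Lemma \ref{lem:6}, $A_s$ is only $O(\eps^{(1-H)/2})$ in $L^2$, so Cauchy--Schwarz gives a bound \emph{larger} than $\eps^{1/2}$; you must exploit $A_s\,dA_s=\frac{1}{2}d(A_s^2)$, a second integration by parts against the slowly varying weight, and fourth-moment analogues of Lemma \ref{lem:6} to reach $O(\eps^{1-H})$, which is $o(\eps^{1/2})$ \emph{only because} $H<1/2$. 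The paper's corresponding term $R^{(1)}$ instead carries the factor $\phi^\eps_s$, which is $O(\eps^{1-H})$ pointwise in $L^2$ (Lemma \ref{lem:3}), so the bound is a one-line consequence. (The same $O(\eps)$-versus-$O(\eps^{1-H})$ correction applies to your forgetting estimate $K_s-K_t$ versus $A_s$; these remainders, not only the integrability of ${\cal K}$ and ${\cal C}_Z$, are where $H<1/2$ is used.) Second, ``matching martingale parts'' of $N_s$ and $g_s(X_s)$ does not follow from $L^2$-closeness of the two processes at fixed times: you need to argue via $D_s=N_s-g_s(X_s)$, which has $D_T=0$ and explicit drift $-\frac{1}{2}((\sigma^\eps_s)^2-\overline{\sigma}^2)(x^2\partial_x^2 g_s)(X_s)$, so that $\EE[\int_t^T\gamma_s^2\,ds]\le 2\EE[D_t^2]+2\,\EE\big[\big(\int_t^T\frac{1}{2}((\sigma^\eps_s)^2-\overline{\sigma}^2)(x^2\partial_x^2 g_s)(X_s)\,ds\big)^2\big]=o(1)$, both pieces being controlled by the same lemmas. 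Neither issue is fatal, but the paper's route renders both moot, which is what its explicit approximate-martingale construction buys.
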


This proposition shows that the result is similar to the mixing (Markov) case addressed in \cite{fouque00,fouque11}.
In the fast-varying framework, the short-range correlation property of the stochastic volatility
is not visible to leading order nor  in the first correction. 
This is in contrast to the slowly-varying case addressed in \cite{sv1}
and this is the main result of this paper.

\subsection{The Case of  Riemann-Liouville Fractional Brownian Motion}

A common approach for modeling with fractional processes is to use
the Riemann-Liouville fractional Brownian motion defined by: 
\begin{equation}
\label{def:WH00}
W^{H,0}_t=  \frac{1}{\Gamma(H+\frac{1}{2})} \int_0^t (t-s)^{H-\frac{1}{2}}dW^0_s ,
\end{equation}
where $W^0$ is a standard Brownian motion.
From a mathematical perspective this is convenient as compared to the fractional Brownian
motion in  (\ref{mandelbrot}) because one does not have to deal with the integral for negative
times and the associated compensator.  However, from the modeling perspective
it has the disadvantage that the time zero plays a special role and the process
does not have stationary increments. 
By modeling the driving process as in (\ref{mandelbrot})  we obtain on the other hand a 
time homogeneous process.  
 In  \cite{fukasawa15}    a somewhat different approach to modeling with
 time-homogeneous fractional processes  is  used by using  a representation of fractional 
Brownian motion introduced by \cite{muravlev}.   
 In any case, we can use the representation in (\ref{def:WH00}) to define
a fOU process analogous to  (\ref{eq:fOU}) by  
 \begin{eqnarray}
\nonumber
Z^{\eps,0}_t &=& Z_0 e^{-t/\eps} +\eps^{-H} \int_0^t e^{-\frac{t-s}{\eps}} dW^{H,0}_s\\
&=& Z_0 e^{-t/\eps} +\eps^{-H} W^{H,0}_t - \eps^{-H-1} 
\int_0^t  e^{-\frac{t-s}{\eps}} W^{H,0}_s ds ,
\end{eqnarray}
where $Z_0$ is considered as a constant. In the  modeling context considered here,  from 
the point of view of the process covariance, the time epoch of  negative times is  in fact  quickly
forgotten so that  for any $t>0$, $s \geq 0$:
\ban
\lim_{\eps \to 0} {\rm Cov} \big( Z^{\eps,0}_t , Z^{\eps,0}_{t+\eps s}  \big) =
 \lim_{\eps \to 0}    {\rm Cov} \big( Z^{\eps}_t , Z^{\eps}_{t+\eps s}  \big) =\sigma^2_{{\rm ou}} {\cal C}_Z(s),
\ean
 and  in fact the covariances  only differ for  a time epoch of duration $\eps$ after
time zero.  The consequence is  that Proposition  \ref{prop:main}  
holds true when $Z^{\eps}_t$ is replaced by $Z^{\eps,0}_t$.
We discuss this in more detail in Appendix \ref{sec:appB}.
 
\subsection{Sketch  of Proof of Proposition \ref{prop:main}}

Our objective  is to  construct an approximation $Q_t^\eps(X_t)$ for the price.
Note that a natural first choice is to choose the approximation as
$Q_t^{(0)}(X_t)$, that is the Black--Scholes price
at the effective volatility.
In order to construct  a higher-order approximation we look for a correction,  
denoted by $\Delta Q_t^\eps(X_t)$, so that  
\ban
Q_t^{(0)}(X_t)+ \Delta Q_t^\eps(X_t) = \hbox{martingale} + \hbox{``small terms''} ,
\ean
and with $\Delta Q_T^\eps(X_T)=0$ so that the corrected approximation has the correct 
payoff. Indeed, the corrected approximation now differs from the exact price only
by the magnitude of the ``small terms'' because the exact price is a martingale.
Moreover, the volatility fluctuation 
process $(\sigma_s^\eps)^2 -\overline{\sigma}^2$ drives
the difference between the exact price and $Q_t^{(0)}(X_t)$.
%that the form of $d Q_t^{(0)}$ can be identified via It\^o calculus and
%where indeed the residual volatility process plays a central role. 
To identify the form of the price correction  $\Delta Q_t^\eps(X_t)$ and to prove 
 the smallness of the  resulting (non-martingale)  error terms  the  
 introduction of  the martingale  defined in terms of the residual volatility fluctuations
is useful. That is why we introduce the process
\begin{equation}
\label{def:Kt0}
\psi_t^\eps = 
\EE \Big[  \frac{1}{2} \int_0^T \big( (\sigma_s^\eps)^2 -\overline{\sigma}^2 \big) ds \big| {\cal F}_t\Big] .
\end{equation}
This martingale is zero in the constant volatility case. 
It is important to understand its properties
and those of its covariation process with respect  to the underlying 
driving Brownian motion   in order to prove the accuracy of the approximation
and to control the error terms. 
These properties are given in terms of the original technical lemmas in Appendix \ref{sec:app}.
These lemmas could be useful also for the asymptotic analysis of other quantities
than those considered here, but defined in terms of 
underlyings modeled as in this paper.

\subsection{Proof of Proposition \ref{prop:main}}
%\begin{proof}
For any smooth function $q_t(x)$, we have by It\^o's formula
\begin{eqnarray*}
dq_t(X_t) &=& \partial_t q_t(X_t) dt +  \big( x \partial_x\big) q_t (X_t) \sigma_t^\eps dW_t^*
+\frac{1}{2}  \big( x^2 \partial_x^2\big) q_t (X_t) (\sigma_t^\eps)^2 dt\\
&=&
{\cal L}_{\rm BS}(\sigma_t^\eps) q_t(X_t) dt +  \big( x \partial_x\big) q_t(X_t) \sigma_t^\eps dW_t^*  ,
\end{eqnarray*}
the last term being a martingale.
Here and below $ \big( x \partial_x\big) q_t(X_t)$ stands for $ x \partial_x  q_t(x)$ evaluated at ${x=X_t}$.
Therefore, by (\ref{eq:bs0}), we have
\begin{eqnarray}
dQ_t^{(0)}(X_t)  &=&
\frac{1}{2}\big( (\sigma_t^\eps)^2-\overline{\sigma}^2\big) \big( x^2 \partial_x^2 \big) Q_t^{(0)}(X_t) dt + dN^{(0)}_t ,
\end{eqnarray}
with $N_t^{(0)}$ a martingale:
$$
dN_t^{(0)}= \big( x \partial_x \big)  Q_t^{(0)}  (X_t) \sigma_t^\eps dW_t^* .
$$

Let $\phi_t^\eps$ be defined by
\begin{equation}
\label{def:phit}
\phi_t^\eps= \EE\Big[ \frac{1}{2} \int_t^T \big( (\sigma_s^\eps)^2 -\overline{\sigma}^2 \big) ds \big| {\cal F}_t\Big] .
\end{equation}
We have
$$
\phi_t^\eps = \psi_t^\eps - \frac{1}{2} \int_0^t \big( (\sigma_s^\eps)^2 -\overline{\sigma}^2 \big) ds ,
$$
where the martingale $\psi_t^\eps$ is defined by 
\begin{equation}
\label{def:Kt}
\psi_t^\eps = 
\EE \Big[  \frac{1}{2} \int_0^T \big( (\sigma_s^\eps)^2 -\overline{\sigma}^2 \big) ds \big| {\cal F}_t\Big] .
\end{equation}
We can write
$$
\frac{1}{2}\big( (\sigma_t^\eps)^2-\overline{\sigma}^2\big)
 \big( x^2 \partial_x^2 \big) Q_t^{(0)}(X_t) dt =
 \big( x^2 \partial_x^2 \big) Q_t^{(0)}(X_t) d\psi_t^\eps -
 \big( x^2 \partial_x^2 \big) Q_t^{(0)}(X_t) d\phi_t^\eps .
$$
By It\^o's formula:
\begin{eqnarray*}
d \big[ \phi_t^\eps \big( x^2 \partial_x^2 \big) Q_t^{(0)}(X_t)\big] &=&
\big( x^2 \partial_x^2 \big) Q_t^{(0)}(X_t) d\phi_t^\eps+
\big( x\partial_x\big(  x^2 \partial_x^2 \big)\big) Q_t^{(0)}(X_t) \sigma_t^\eps \phi_t^\eps dW_t^*
\\
&&+ {\cal L}_{\rm BS}(\sigma_t^\eps)  \big(  x^2 \partial_x^2 \big) Q_t^{(0)}(X_t)  \phi_t^\eps dt
\\
&&
+ \big( x\partial_x\big(  x^2 \partial_x^2 \big)\big) Q_t^{(0)}(X_t) \sigma_t^\eps d\left< \phi^\eps ,W^*\right>_t  .
\end{eqnarray*}
Since ${\cal L}_{\rm BS} (\sigma_t^\eps)= {\cal L}_{\rm BS}(\overline{\sigma}) + \frac{1}{2} \big( (\sigma_t^\eps)^2 - \overline{\sigma}^2\big)\big( x^2 \partial_x^2 \big)$
and ${\cal L}_{\rm BS}(\overline{\sigma})  \big(  x^2 \partial_x^2 \big) Q_t^{(0)}(x)=0$, this gives
\begin{eqnarray*}
d \big[ \phi_t^\eps \big( x^2 \partial_x^2 \big) Q_t^{(0)}(X_t)\big] &=&
-\frac{1}{2} \big( (\sigma_t^\eps)^2 - \overline{\sigma}^2\big)
\big( x^2 \partial_x^2 \big) Q_t^{(0)}(X_t) dt\\
&&+\frac{1}{2} \big( (\sigma_t^\eps)^2 - \overline{\sigma}^2\big)
\big( x^2\partial_x^2\big(  x^2 \partial_x^2 \big)\big)  Q_t^{(0)}(X_t) \phi_t^\eps dt\\
&&+ \big( x\partial_x\big(  x^2 \partial_x^2 \big) \big) Q_t^{(0)}(X_t) \sigma_t^\eps  d\left< \phi^\eps ,W^*\right>_t  
\\
&&  +
\big( x\partial_x\big(  x^2 \partial_x^2 \big)\big) Q_t^{(0)}(X_t) \sigma_t^\eps \phi_t^\eps dW_t^*+
\big( x^2 \partial_x^2 \big) Q_t^{(0)}(X_t) d\psi_t^\eps
.
\end{eqnarray*}
We have $\left< \phi^\eps ,W^*\right>_t = \left< \psi^\eps ,W^*\right>_t = \rho \left< \psi^\eps ,W\right>_t$ and therefore
\begin{eqnarray*}
d \big[( \phi_t^\eps \big( x^2 \partial_x^2 \big) Q_t^{(0)}(X_t)\big]
&=& -\frac{1}{2} \big( (\sigma_t^\eps)^2 - \overline{\sigma}^2\big)
\big( x^2 \partial_x^2 \big) Q_t^{(0)}(X_t) dt\\
&&+\frac{1}{2} \big( (\sigma_t^\eps)^2 - \overline{\sigma}^2\big)
\big( x^2\partial_x^2\big(  x^2 \partial_x^2 \big)\big)   Q_t^{(0)}(X_t) \phi_t^\eps dt
\\
&&+ \rho \big( x\partial_x\big(  x^2 \partial_x^2 \big) \big) Q_t^{(0)}(X_t)  \sigma_t^\eps    d\left< \psi^\eps ,W \right>_t   \\
&&+ dN^{(1)}_t ,
\end{eqnarray*}
where $N^{(1)}_t$ is a martingale,
$$
dN^{(1)}_t =\big( x\partial_x\big(  x^2 \partial_x^2 \big)\big) Q_t^{(0)}(X_t) \sigma_t^\eps \phi_t^\eps dW_t^*
+
 \big( x^2 \partial_x^2 \big) Q_t^{(0)}(X_t) d\psi_t^\eps .
$$
Therefore
\begin{eqnarray}
\nonumber
d \big[ Q_t^{(0)}(X_t)+ \phi_t^\eps \big( x^2 \partial_x^2 \big) Q_t^{(0)}(X_t)\big]
&=&  \frac{1}{2} 
\big( x^2\partial_x^2\big(  x^2 \partial_x^2 \big)\big) Q_t^{(0)}(X_t) \big( (\sigma_t^\eps)^2 - \overline{\sigma}^2\big)\phi_t^\eps dt
\\
\nonumber
&&+ \rho   \big( x\partial_x\big(  x^2 \partial_x^2 \big) \big) Q_t^{(0)}(X_t) 
\sigma_t^\eps  \vartheta^\eps_{t}
dt
   \\
&&+  dN^{(0)}_t +dN^{(1)}_t  .
\end{eqnarray} 
Here,  we have introduced the covariation increments 
\begin{eqnarray}
d \left< \psi^\eps, W\right>_t =  \vartheta^\eps_{t} dt  ,
\end{eqnarray}
defined in  Lemma \ref{lem:1}. 

The deterministic function $Q^{(1)}_t$ defined by (\ref{def:Q1t}) satisfies
$$
{\cal L}_{\rm BS}(\overline{\sigma}) Q^{(1)}_t(x) = - \overline{D} 
\big( x\partial_x ( x^2 \partial_x^2 )\big) Q^{(0)}_t(x)   , \quad \quad Q^{(1)}_T(x) = 0.
$$
Applying It\^o's formula
\begin{eqnarray*}
dQ_t^{(1)}(X_t)  
&=&
{\cal L}_{\rm BS}(\sigma_t^\eps) Q_t^{(1)}(X_t) dt +  \big( x \partial_x Q_t^{(1)}\big) (X_t) \sigma_t^\eps dW_t^* \\
&=& {\cal L}_{\rm BS}(\overline\sigma) Q_t^{(1)}(X_t) dt + 
\frac{1}{2} \big( (\sigma_t^\eps)^2 -\overline{\sigma}^2 \big) \big( x^2 \partial_x^2 \big) Q_t^{(1)}(X_t) dt  \\
&&
+
 \big( x \partial_x \big) Q_t^{(1)}(X_t) \sigma_t^\eps dW_t^*  \\
 &=& \frac{1}{2} \big( (\sigma_t^\eps)^2 -\overline{\sigma}^2 \big) \big( x^2 \partial_x^2 \big) Q_t^{(1)}(X_t) dt 
 - \big( x\partial_x  ( x^2 \partial_x^2) \big)Q^{(0)}_t (X_t)   \overline{D} dt \\
&& + dN^{(2)}_t ,
\end{eqnarray*}
where $N^{(2)}_t$ is a martingale,
$$
dN^{(2)}_t =  \big( x \partial_x \big)  Q_t^{(1)}  (X_t) \sigma_t^\eps dW_t^*   .
$$
Therefore
\begin{eqnarray}
\nonumber
&&
d \big[ Q_t^{(0)}(X_t)+ \phi_t^\eps \big( x^2 \partial_x^2 \big) Q_t^{(0)}(X_t)
+\eps^{1/2} \rho  Q_t^{(1)}(X_t)  
\big] \\
\nonumber
&&
 = \frac{1}{2} 
\big( x^2\partial_x^2(  x^2 \partial_x^2)\big) Q_t^{(0)}(X_t) \big( (\sigma_t^\eps)^2 - \overline{\sigma}^2\big)\phi_t^\eps dt \\
\nonumber
&& \quad 
+\frac{\eps^{1/2}}{2} \rho  \big( x^2 \partial_x^2 \big) Q_t^{(1)}(X_t) \big( (\sigma_t^\eps)^2 -\overline{\sigma}^2 \big) dt  \\
\nonumber
 &&
\quad 
+  \rho   \big( x\partial_x (  x^2 \partial_x^2 ) \big) Q_t^{(0)}(X_t)  \big( \sigma_t^\eps  {\vartheta}_{t}^{\eps} - \eps^{1/2} \overline{D}\big) dt \\
&&
\quad
+ d N^{(0)}_t + dN^{(1)}_t +\eps^{1/2} \rho dN^{(2)}_t .
\label{eq:proof1b}
\end{eqnarray}
We next  show that the first three terms of the right-hand side of (\ref{eq:proof1b}) are smaller than $\eps^{1/2}$.
We introduce for any $t \in [0,T]$:
\begin{eqnarray}
R^{(1)}_{t,T} &=& \int_t^T  \frac{1}{2} \big( x^2\partial_x^2(  x^2 \partial_x^2 )\big) Q_s^{(0)}(X_s)  \big( (\sigma_s^\eps)^2 - \overline{\sigma}^2\big) \phi_s^\eps
 ds , \\
R^{(2)}_{t,T} &=& \int_t^T \frac{\eps^{1/2}}{2} \rho  \big( x^2 \partial_x^2 \big) Q_s^{(1)}(X_s)  \big( (\sigma_s^\eps)^2 -\overline{\sigma}^2 \big)ds , \\
R^{(3)}_{t,T} &=& \int_t^T  \rho   \big( x\partial_x(  x^2 \partial_x^2 ) \big) Q_s^{(0)}(X_s) 
 \big( \vartheta_{s}^\eps   \sigma_s^\eps-\eps^{1/2}  \overline{D} )ds  .
\end{eqnarray}

We will show that,  for $j=1,2,3$,
\begin{equation}
\label{eq:estimeRj}
\displaystyle \lim_{\eps \to 0} \eps^{-1/2} \sup_{t \in [0,T]} \EE \big[ (R^{(j)}_{t,T})^2 \big]^{1/2} =0 .
\end{equation}

{\it Step 1: Proof of (\ref{eq:estimeRj}) for $j=1$.}\\
Since $Q^{(0)}$ is smooth and bounded and $F$ is bounded, there exists $C$ such that
$$
 \sup_{t \in [0,T]}  \EE \big[ (R^{(1)}_{t,T})^2 \big] \leq C T \int_0^T \EE \big[ (\phi_s^\eps)^2 \big] ds .
$$
By Lemma \ref{lem:3} we get the desired result.\\

{\it Step 2: Proof of (\ref{eq:estimeRj}) for $j=2$.}\\
We denote 
$$
Y^{(2)}_s =   \rho  \big( x^2 \partial_x^2 \big) Q_s^{(1)}(X_s)  
$$
and
\begin{equation}
\label{def:kappaeps}
\kappa_t^\eps = \frac{\eps^{1/2}}{2}  \int_0^t \big( (\sigma_s^\eps)^2 -\overline{\sigma}^2 \big)ds  ,
\end{equation}
so that 
$$
R^{(2)}_{t,T} = \int_t^T Y^{(2)}_s \frac{d\kappa_s^\eps }{ds} ds .
$$
Note that $Y^{(2)}_s$ is a bounded semimartingale with bounded quadratic variations.
Let $N$ be a positive integer. We denote $t_k=t+(T-t)k/N$. We then have
\begin{eqnarray*}
R^{(2)}_{t,T} &=& \sum_{k=0}^{N-1} \int_{t_k}^{t_{k+1}}Y^{(2)}_s \frac{d\kappa_s^\eps }{ds} ds =R^{(2,a)}_{t,T} +R^{(2,b)}_{t,T}  ,\\
R^{(2,a)}_{t,T} &=&\sum_{k=0}^{N-1} \int_{t_k}^{t_{k+1}}Y^{(2)}_{t_k} \frac{d\kappa_s^\eps }{ds} ds = \sum_{k=0}^{N-1} Y^{(2)}_{t_k} \big( \kappa_{t_{k+1}}^\eps -
 \kappa_{t_{k}}^\eps \big) , \\
 R^{(2,b)}_{t,T} &=&\sum_{k=0}^{N-1} \int_{t_k}^{t_{k+1}}\big( Y^{(2)}_{s}-Y^{(2)}_{t_k}\big)  \frac{d\kappa_s^\eps }{ds} ds  .
\end{eqnarray*}
Then,  on the one hand
\begin{eqnarray*}
\EE \big[ (R^{(2,a)}_{t,T})^2 \big]^{1/2} &\leq & 2 \sum_{k=0}^{N} \|Y^{(2)}\|_\infty 
\EE[ (\kappa_{t_{k}}^\eps)^2]^{1/2}  \\
&\leq&  2  (N+1) \|Y^{(2)}\|_\infty \sup_{s\in [0,T]}
\EE[ (\kappa_{s}^\eps)^2]^{1/2}  ,
\end{eqnarray*}
so that, by Lemma \ref{lem:6},
$$
 \lim_{\eps \to 0} \eps^{-1/2} \sup_{t \in [0,T]} \EE \big[ (R^{(2,a)}_{t,T})^2 \big]^{1/2} =0 .
$$
On the other hand 
\begin{eqnarray*}
\EE \big[ (R^{(2,b)}_{t,T})^2 \big]^{1/2} &\leq & \eps^{1/2} \|F\|_\infty^2 \sum_{k=0}^{N-1} \int_{t_k}^{t_{k+1}} \EE[ \big( Y^{(2)}_{s}-Y^{(2)}_{t_k}\big)^2]^{1/2}ds \\
&\leq & K \eps^{1/2} \sum_{k=0}^{N-1} \int_{t_k}^{t_{k+1}} (s-t_k)^{1/2} ds  = \frac{2K T^{3/2}\eps^{1/2}}{3\sqrt{N}}   .
\end{eqnarray*}
Therefore, we get
\begin{eqnarray*}
 \limsup_{\eps \to 0} \eps^{-1/2} \sup_{t \in [0,T]} \EE \big[ (R^{(2)}_{t,T})^2 \big]^{1/2} &\leq & \limsup_{\eps \to 0} \eps^{-1/2} \sup_{t \in [0,T]} \EE \big[ (R^{(2,b)}_{t,T})^2 \big]^{1/2} \\ 
 &\leq & \frac{2KT^{3/2}}{3\sqrt{N}} .
\end{eqnarray*}
Since this is true for any $N$, we get the desired result.\\

{\it Step 3: Proof of (\ref{eq:estimeRj}) for $j=3$.}\\
We repeat the same arguments as in the previous step. It remains to show that
$$
\widetilde{\kappa}^\eps_t =  \int_0^t \big( \vartheta^\eps_s \sigma_s^\eps - \eps^{1/2} \overline{D} \big)ds  
$$
satisfies
$$
 \lim_{\eps \to 0} \eps^{-1/2} \sup_{t \in [0,T]} \EE \big[ (\widetilde{\kappa}^\eps_t)^2\big]^{1/2}  =0.
$$
Since $(a+b)^2\leq 2a^2 + 2b^2$,  we have
$$
\EE \big[ (\widetilde{\kappa}^\eps_t)^2\big] \leq
2 \int_0^t ds \int_0^t ds' {\rm Cov} 
\big( \vartheta^\eps_s \sigma_s^\eps ,\vartheta^\eps_{s'} \sigma_{s'}^\eps \big)
+2 \Big( \int_0^t \big(\EE [  \vartheta^\eps_s \sigma_s^\eps ] - \eps^{1/2} \overline{D}  \big) ds \Big)^2  .
$$
By Lemma \ref{lem:2}, items 1 and 3, and dominated convergence theorem, the first term of the right-hand side 
is $o(\eps)$ uniformly in $t\in [0,T]$.
By Lemma \ref{lem:2}, item 2, the second term of the right-hand side  is $o(\eps)$ uniformly in $t\in [0,T]$.
This gives the desired result.

We can now complete the proof of Proposition \ref{prop:main}.
We introduce the approximation: 
$$
\widetilde{Q}_t^\eps(x) = 
Q_t^{(0)}(x)+ \phi_t^\eps \big( x^2 \partial_x^2 \big) Q_t^{(0)}(x)
+\eps^{1/2} \rho Q_t^{(1)}(x).
$$
We then have 
$$
\widetilde{Q}_T^\eps(x) = h(x),
$$
because $Q_T^{(0)}(x)=h(x)$, $\phi_T^\eps=0$, and $Q^{(1)}_T(x)=0$.
Let us denote
\begin{eqnarray}
R_{t,T} &=&R^{(1)}_{t,T}+
R^{(2)}_{t,T}+
R^{(3)}_{t,T} ,\\
N_t &=&  \int_0^t d N^{(0)}_s + dN^{(1)}_s +\eps^{1/2} \rho dN^{(2)}_s  .
\end{eqnarray}
By (\ref{eq:proof1b}) we have
$$
\widetilde{Q}_T^\eps(X_T) - \widetilde{Q}_t^\eps(X_t) = R_{t,T} + N_T-  N_t.
$$
Therefore
\begin{eqnarray}
\nonumber
M_t &=& \EE \big[ h(X_T) |{\cal F}_t \big] = 
\EE \big[ \widetilde{Q}_T^\eps(X_T) |{\cal F}_t \big] \\
\nonumber
&=& \widetilde{Q}_t^\eps(X_t) +\EE \big[ R_{t,T} |{\cal F}_t \big]+
\EE \big[ N_T-N_t |{\cal F}_t \big] \\
&=&
\widetilde{Q}_t^\eps(X_t) +\EE \big[ R_{t,T} |{\cal F}_t \big]
 ,
\end{eqnarray}
which gives the desired result since $\EE \big[ R_{t,T}  |{\cal F}_t \big]$ and $\phi^\eps_t$ are 
{uniformly} of order $o(\eps^{1/2})$ in $L^2$
 (see (\ref{eq:estimeRj}) for $R_{t,T}$ and see Lemma \ref{lem:3} for $\phi^\eps_t$).
%\end{proof}

\section{The Implied Volatility}
\label{sec:implied}
We now compute and discuss the implied  volatility  associated with the price
approximation given in Proposition \ref{prop:main}.
This implied volatility  is the volatility that when used in the constant 
volatility Black--Scholes pricing formula gives the same  price as the approximation,
to  the order of the approximation.  
The implied volatility in the context of the European option 
introduced in the previous section is then given by
\begin{equation}
\label{eq:iv1}
I_t = \bar{\sigma} 
+ \eps^{1/2}  \rho \overline{D}\Big[ \frac{1}{2\bar{\sigma} } + \frac{\log(K/X_t)}{\bar{\sigma}^3 (T-t)} \Big]
+o( \eps^{1/2})  .
\end{equation}

The expression (\ref{eq:iv1}) is in agreement with the one obtained in \cite[Eq. (5.55)]{fouque00} with a stochastic
volatility that is an ordinary Ornstein--Uhlenbeck process, that is, 
a Markovian process with correlations decaying exponentially fast.
See for instance \cite{25,smile,fouque11} and references therein for data calibration examples.  
 
\section{A Brief Review on Fractional Stochastic Volatility Asymptotics}
\label{sec:summ}

This paper together with \cite{sv1,sv2} discuss 
different fractional stochastic volatility models with $H<1/2$ or $H>1/2$.
We summarize here some main aspects. 

\subsection{Characteristic Term Structure Exponent}

We write the implied volatility associated with a 
European Call Option for strike $K$, maturity $T$, current time $t$,
and current value for the underlying $X_t$ 
(as  in Eq.~(\ref{eq:iv1}))  in the general form:
 \begin{eqnarray}\label{eq:iv3}
I_t &=& \sigma_{t,T}
+   \Delta \sigma
\left[  \left( \frac{\tau}{\bar\tau} \right)^{\zeta(H)} 
+  \left( \frac{\tau}{\bar\tau} \right)^{\zeta(H)-1} {\log\Big(\frac{K}{X_t}\Big)} \right]
  , \\
 \sigma_{t,T}  &=&    \EE\Big[ \frac{1}{T-t} \int_t^T (\sigma_s)^2 ds \big| {\cal F}_t \Big]^{1/2} 
 =  \bar{\sigma} +   \tilde{\sigma}_{t,T} ,
 \label{eq:iv4}
\end{eqnarray}
where   $\sigma_s$ is the volatility path, $\bar{\tau}$ is the characteristic  diffusion time defined by
\begin{equation}
\label{def:bartau}
 \bar\tau = \frac{2}{{\overline\sigma}^2}   ,
\end{equation}
and $\tau=T-t$ is the  time to maturity. 
Note that in the regimes that we consider we assume that the  
time to maturity is of the same order as the characteristic diffusion time.  
We refer to 
$\zeta(H)$ as the characteristic term structure exponent.
 Note that $\tilde{\sigma}_{t,T}$ is  the price path predicted 
 volatility correction relative to the time horizon and is a stochastic
 process adapted to the filtration generated by the 
 underlying price process. This is a correction term that reflects
 the multiscale nature of the volatility fluctuations and the
 memory  aspect of this process. The second correction
 term in Eq. (\ref{eq:iv3}) involving $\Delta \sigma$ is a skewness correction  and 
 vanish in the case $\rho=0$. As mentioned above it is natural to let  the
 characteristic  diffusion time be the reference time scale, if we denote the 
 mean reversion  time of the volatility fluctuations  by
 $\tau_{\rm mr}$ then we have considered  two main  multiscale asymptotic 
 regimes in the context of the characteristic term structure exponent:
\begin{itemize}
 \item  Slow mean reverting     
  volatility fluctuations, $\tau_{\rm mr} \gg   \bar\tau$,  (see \cite{sv1}).
 In this case:
 \begin{equation}
\label{eq:zetaslow}
 \zeta(H)   =   H+\frac{1}{2}   .
 \end{equation}
\item Fast mean reverting  volatility fluctuations, 
$\tau_{\rm mr} \ll   \bar\tau$, (see \cite{sv2} for $H \in (1/2,1)$ and this paper
for $H\in (0,1/2)$).
 In this case:
 \begin{equation}
 \label{eq:zetafast}
 \zeta(H)   =   \max \Big(   H-\frac{1}{2} , 0\Big) .
 \end{equation}
 \end{itemize}
 Thus,  we see that in the case of fast mean reversion 
 leading to a  singular perturbation   expansion 
 we have a fractional characteristic term structure exponent
 only in the case $H>1/2$ when  we have long-range correlation properties.
 While in the slow mean reversion  case leading to a regular perturbation   expansion 
 we have a  fractional characteristic term structure exponent for all values of
 the Hurst exponent $H$.
 
 In \cite{sv1} we considered also the case of small volatility fluctuations  whose 
mean reversion time  is of the same order as  the characteristic diffusion time. 
 This leads 
 to an asymptotic regime where the  characteristic term structure exponent 
 is replaced by a more general  characteristic term structure factor   
 of the form (assuming a fOU volatility factor):
 \begin{eqnarray*}
     \Big( \frac{\tau}{\bar\tau} \Big)^{\zeta(H)} &  \rightarrow   &  
     {\cal A} \Big(\frac{\tau}{\bar\tau}, \frac{\tau}{\tau_{\rm mr}}\Big) =
   \Big( \frac{\tau}{\bar\tau} \Big)^{H+1/2} 
\Big\{ 1-\int_0^{\tau/\tau_{\rm mr}} e^{-v}
 \Big( 1-\frac{v}{\tau/\tau_{\rm mr}} \Big)^{H+\frac{3}{2}}  
dv \Big\}  .
 \end{eqnarray*}
 % asymtot a, non-Gaussian
 %Alos
 We then have in a subsequent limit of either  slow 
 ($\tau_{\rm mr}  \gg \tau$) or fast ($\tau_{\rm mr} \ll \tau $) mean reversion:
 \begin{equation}
 \label{eq:Apred}
  {\cal A} \Big(\frac{\tau}{\bar\tau}, \frac{\tau}{\tau_{\rm mr}}\Big) 
 \propto  \left\{
\begin{array}{ll}
   \left( \frac{\tau}{\bar\tau} \right)^{H+1/2}
   &   ~\hbox{for}~ \tau \ll \tau_{\rm mr}  ,  \\
   \left( \frac{\tau}{\bar\tau} \right)^{H-1/2}
   &   ~\hbox{for}~ \tau \gg  \tau_{\rm mr}   , 
\end{array}
\right.
\end{equation}
where we have a fractional term structure for all values of $H$.
It follows  that the characteristic term structure exponent is consistent
 with the result (\ref{eq:zetaslow}) obtained in the slow mean reverting limit.
 It is also consistent with the result (\ref{eq:zetafast}) obtained in the fast mean reverting limit,
 but only in the case $H \in (1/2,1)$.
 There is no contradiction in the case $H\in (0,1/2)$
 because there is no fundamental reason that would justify that 
 the limits ``small amplitude" and ``fast mean reversion" are exchangeable.
 This means that the prediction (\ref{eq:Apred}) for $H\in (0,1/2)$ in the limit ``small amplitude"
and then   ``fast mean reversion" does not capture the leading-order contribution of the limit  ``fast mean reversion"
 that is  independent of time to maturity,  but is negligible for small-amplitude volatility fluctuations.
 Note that when the standard deviation of the volatility 
 fluctuations is of the same order as the mean volatility and the time 
 to maturity is of the same order as the mean reversion time,
then the implied volatility reflects the particular structure of the model,
 see for instance the analysis of the Heston (\cite{heston})  model in   \cite{alos2}.
 Note also that with a model for how the implied volatility depends on  the Hurst exponent 
 we can actually estimate the Hurst exponent based on recordings of the implied volatility. 
An example  with estimation of the Hurst exponent  based on a spot  volatility
proxy deriving from implied volatility  is in \cite{livieri} and yields a rough volatility regime. 
 A calibration example  
for $H$  using VIX futures  is in \cite{jacquier} and yields  again a rough
volatility regime. 
    
 \subsection{Flapping of the Implied Surface}
 
Regarding the price path predicted 
 volatility correction $\tilde{\sigma}_{t,T}$  which depends on the price history we have the
 following picture in  the regime of fast  mean reversion:
 \begin{itemize}
\item Rough volatility fluctuations, $H<1/2$: 
 \begin{eqnarray}
  \tilde{\sigma}_{t,T}   &=&  o(\Delta \sigma)  .
  \end{eqnarray}
  \item  Smooth volatility fluctuations, $H>1/2$: 
 \begin{eqnarray}
  \tilde{\sigma}_{t,T}   &=&  O(\Delta \sigma)  .
 \end{eqnarray}
 \end{itemize}
In the case of smooth volatility fluctuations we discuss in detail 
in \cite{sv2} the statistical structure of the ``t-T'' process  $\tilde{\sigma}_{t,T}$. 
 We remark
that indeed in the  scaling  addressed in this paper we have $\Delta \sigma/\bar\sigma \ll 1$.  
In the regime of slow  mean reversion as discussed in \cite{sv1} we
have that  $\tilde{\sigma}_{t,T}   =  O(\bar\sigma)$ since then  the current level
of volatility  plays a central role.

% POSSIBLY:(no)
% \subsection{Magnitude of Coherent Price Impact}
%   plot term structure  exponent

\section{Conclusion}

We have considered rough fractional stochastic volatility models. Such modeling
is motivated by a number of recent empirical findings that the volatility is 
not well modeled by a Markov  process with exponentially decaying correlations and certainly
not by a constant. Rather  it should be modeled as a stochastic process with correlations that are rapidly decaying at the origin, 
qualitatively faster than the  decay that can be associated with a Markov process. 
 In general such models are challenging to use since  the volatility factor is not a
Markov process nor a martingale so we do not have a pricing partial differential equation.  
However, here we consider the situation where the volatility is fast mean reverting in the sense 
that its  mean reversion time is short relative to the characteristic diffusion time of the price process. 
In this regime the pricing problem and associated implied volatility 
surface can be reduced to a parametric form corresponding to that of the Markovian case.
An important aspect of our modeling is that we model the rough  stochastic volatility as being 
a stationary process. 
Many if not most papers  on this subject have hitherto used a non-stationary framework where the 
``time zero'' plays a special role.  In the case of processes with memory  of the past, rather than being Markov,
we consider this aspect to be crucial from the modeling viewpoint.
Indeed,  in the general case the history (in principle observable from the underlying price path)
impacts the implied volatility. 
However,  in the regime of fast mean reversion
the impact of the price  history becomes lower order relative to the leading correction
associated with the  stochastic volatility which is explicit and which is identified in this paper.    

It is important to note that this picture in fact breaks down in the case of long-range
stochastic volatility when the volatility factor paths are smoother than in the
Markovian case and when their correlations decay slower than in the Markovian case.
It also breaks down in the asymptotic context when the mean reversion time is of the same order as
the characteristic diffusion time of the price process, 
but volatility fluctuations have small standard deviation compared to the mean. 
In these cases both with short- (rough volatility) and long-range
correlation properties the structure of the model for the price correction and the implied volatility changes 
and leads to a picture with a fractional term structure. These results are  derived in 
\cite{sv1,sv2} and summarized in Section \ref{sec:summ} herein.    

These observations  then serve to partly explain why parameterizations 
for the implied surface deriving from a Markovian modeling 
have been successful in capturing
the implied volatility surface despite empirical observations that refute the Markovian
framework.  
%Moreover, this observation is important because it shows how one can construct importance
%sampling based  Monte Carlo approaches in the situation around the Markovian case to get
%pricing approximations in the general case when the time to maturity is not so large, 
%moreover, when the history of the process may play a crucial role. 
Finally, this analytic result confirms the results of the recent paper \cite{funahashi} when
the price associated with a rough stochastic volatility model was computed numerically.    
  
\section*{Acknowledgements}
  This research  has been  supported in part by 
 %AFOSR grant  FA9550-18-1-0217, NSF  grant 1616954, 
 Centre Cournot, Fondation Cournot, and 
 Universit\'e Paris Saclay (chaire D'Alembert).

\appendix
 
\section{Technical Lemmas}\label{sec:app}
We denote
\begin{equation}
\label{def:G}
G(z) = \frac{1}{2} \big( F(z)^2 - \overline{\sigma}^2\big) .
\end{equation}
The martingale $\psi^\eps_t$ defined by (\ref{def:Kt}) has the form
\begin{equation}
\psi_t^\eps = 
\EE \Big[   \int_0^T G(Z_s^\eps)  ds \big| {\cal F}_t\Big] .
\end{equation}

\begin{lemma}
\label{lem:1}%
$(\psi_t^\eps)_{t\in [0,T]}$ is a square-integrable martingale and
\begin{equation}
\label{def:varthetaeps}
d \left< \psi^\eps, W\right>_t =  \vartheta^\eps_{t} dt ,
\quad \quad \vartheta^\eps_{t} = \sigma_{{\rm ou}} \int_t^T \EE \big[ G'(Z_s^\eps)|{\cal F}_t \big]{\cal K}^\eps(s-t) ds   .
\end{equation}
\end{lemma}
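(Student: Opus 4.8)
The plan is to first record the elementary structural facts and then pin down the covariation density $\vartheta^\eps$ by conditioning plus an It\^o computation (equivalently, by the Clark--Ocone formula). Since $F$ is bounded, $G=\tfrac12(F^2-\overline{\sigma}^2)$ is bounded, so $\Xi:=\int_0^T G(Z^\eps_s)\,ds$ is bounded and $\psi^\eps_t=\EE[\Xi\,|\,{\cal F}_t]$ is a bounded, hence square-integrable, martingale. Because $Z^\eps$ is adapted to $W$ and $B$ is independent of $W$, one has $\psi^\eps_t=\EE[\Xi\,|\,{\cal F}^W_t]$ with ${\cal F}^W_t=\sigma(W_u:u\le t)$, and the martingale representation theorem for the Brownian filtration gives a continuous modification together with $\psi^\eps_t=\psi^\eps_0+\int_0^t\vartheta^\eps_s\,dW_s$ for a predictable $\vartheta^\eps$, with \emph{no} $dB$-term. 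Hence $\left<\psi^\eps,W\right>_t=\int_0^t\vartheta^\eps_s\,ds$ and the only remaining task is to identify $\vartheta^\eps$.

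For the identification, fix $t$ and, for $s>t$, split $Z^\eps_s=m^t_s+X^t_s$, where $m^t_s=\sigma_{{\rm ou}}\int_{-\infty}^t{\cal K}^\eps(s-u)\,dW_u$ is ${\cal F}_t$-measurable and $X^t_s=\sigma_{{\rm ou}}\int_t^s{\cal K}^\eps(s-u)\,dW_u$ is independent of ${\cal F}_t$, centered Gaussian with variance $a^t_s=\sigma_{{\rm ou}}^2\int_0^{s-t}{\cal K}^\eps(w)^2\,dw$. As a process in $t$, $m^t_s$ is a martingale with $d_tm^t_s=\sigma_{{\rm ou}}{\cal K}^\eps(s-t)\,dW_t$, while $\partial_t a^t_s=-\sigma_{{\rm ou}}^2{\cal K}^\eps(s-t)^2$. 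Writing $P_a G(x)=\EE[G(x+\sqrt a\,\xi)]$ for the Gaussian heat semigroup ($\xi$ a standard normal variable), we have $\EE[G(Z^\eps_s)\,|\,{\cal F}_t]=(P_{a^t_s}G)(m^t_s)$; applying It\^o's formula in $t$ and using the heat relation $\partial_a P_a=\tfrac12\partial_x^2 P_a$ — which makes the two $dt$ contributions cancel exactly — together with $\partial_x P_aG=P_aG'$, one obtains
\[
d_t\,\EE[G(Z^\eps_s)\,|\,{\cal F}_t]=\sigma_{{\rm ou}}\,{\cal K}^\eps(s-t)\,\EE[G'(Z^\eps_s)\,|\,{\cal F}_t]\,dW_t .
\]

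Next, integrate over $s$. Write $\psi^\eps_t=\int_0^t G(Z^\eps_s)\,ds+\int_t^T\EE[G(Z^\eps_s)\,|\,{\cal F}_t]\,ds$ and differentiate in $t$: the boundary term at $s=t$ equals $-G(Z^\eps_t)\,dt$ and cancels the contribution of the first integral, while a stochastic Fubini argument lets one pass $d_t$ inside $\int_t^T ds$, leaving $d\psi^\eps_t=\bigl(\sigma_{{\rm ou}}\int_t^T\EE[G'(Z^\eps_s)\,|\,{\cal F}_t]\,{\cal K}^\eps(s-t)\,ds\bigr)\,dW_t$, which is precisely the asserted expression for $\vartheta^\eps_t$. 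The same identity drops out of the Clark--Ocone formula: $\Xi\in\mathbb{D}^{1,2}$ with $D_u\Xi=\int_0^T G'(Z^\eps_s)\,D_uZ^\eps_s\,ds=\sigma_{{\rm ou}}\int_u^T G'(Z^\eps_s)\,{\cal K}^\eps(s-u)\,ds$ (chain rule together with $D_uZ^\eps_s=\sigma_{{\rm ou}}{\cal K}^\eps(s-u)\mathbf{1}_{\{u\le s\}}$, working conditionally on $\sigma(W_v:v\le0)$ to handle the two-sided Brownian motion), whence $\vartheta^\eps_u=\EE[D_u\Xi\,|\,{\cal F}_u]$.

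The one genuinely technical point is the stochastic-Fubini step — equivalently, the validity of $D_u\Xi=\int_0^T G'(Z^\eps_s)D_uZ^\eps_s\,ds$ — which needs the integrability that the hypotheses supply at no cost: $F$ and $F'$ bounded give $\|G'\|_\infty<\infty$, and property (i) of ${\cal K}$ gives ${\cal K}^\eps\in L^1(0,\infty)$ with $\|{\cal K}^\eps\|_{L^1(0,\infty)}=\sqrt\eps\,\|{\cal K}\|_{L^1(0,\infty)}$, so that
\[
\int_0^T\Bigl(\int_0^s{\cal K}^\eps(s-u)\,du\Bigr)^{2}ds\le T\,\|{\cal K}^\eps\|_{L^1(0,\infty)}^{2}<\infty ,
\]
which is exactly the $L^2$ bound needed for the Fubini step and also shows $\vartheta^\eps\in L^2([0,T]\times\Omega)$; note that the singularity of ${\cal K}^\eps$ at the origin is only of order $(s-u)^{H-1/2}$, integrable for every $H\in(0,1/2)$. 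With this in place the computation above is rigorous and yields the stated $\vartheta^\eps_t$.
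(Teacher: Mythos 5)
Your proof is correct. Note that the paper itself does not prove this lemma here: it simply cites Lemma B.1 of \cite{sv2}, so you have supplied the full argument that the authors outsource. Your route --- reduce to the Brownian filtration of $W$ (using $B\perp W$ so that the representation carries no $dB$-term), decompose $Z^\eps_s$ for $s>t$ into the ${\cal F}_t$-measurable mean $m^t_s$ plus an independent Gaussian remainder, write $\EE[G(Z^\eps_s)\,|\,{\cal F}_t]=(P_{a^t_s}G)(m^t_s)$, and let the heat equation $\partial_a P_a=\tfrac12\partial_x^2P_a$ cancel the drift in It\^o's formula --- is exactly the computation one expects behind the cited lemma, and the Clark--Ocone derivation you give in parallel is the standard shortcut. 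You also handle the two genuine technical points correctly: the two-sided Brownian motion (conditioning on $\sigma(W_v:v\le 0)$, which is independent of the increments on $[0,T]$, so the representation theorem applies) and the integrable singularity ${\cal K}^\eps(s-t)\sim (s-t)^{H-1/2}$ at $s=t$, which is harmless since ${\cal K}\in L^1\cap L^2$ and $G'$ is bounded. I see no gap.
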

\begin{proof}
See Lemma B.1 in \cite{sv2}.
\end{proof}

The important properties of the random process $\vartheta^\eps_{t}$ are stated in the following lemma.
\begin{lemma}
\label{lem:2}
\begin{enumerate}
\item The exists a constant $K_T$ such that,
for any $t \in [0,T]$, we have almost surely
\begin{equation}
\big| \sigma_t^\eps \vartheta^\eps_{t}  \big| \leq K_T \eps^{1/2}  .
\end{equation}
\item
For any $t \in [0,T]$, we have 
\begin{equation}
\EE [ \sigma_t^\eps \vartheta^\eps_{t} ] = \eps^{1/2}  \overline{D} + \widetilde{D}^\eps_{t},
\end{equation}
where $ \overline{D}$ is the deterministic constant (\ref{def:DtT})
and $\widetilde{D}^\eps_{t}$ is smaller than $\eps^{1/2}$:
\begin{equation} 
\sup_{\eps \in (0,1]} \sup_{t \in [0,T]}  \eps^{-1/2} \big| \widetilde{D}^\eps_{t}\big| < \infty,
\end{equation}
and
\begin{equation}
\forall t \in [0,T),\quad
\lim_{\eps \to 0} \eps^{-1/2}   \big| \widetilde{D}^\eps_{t}\big| =0.
\end{equation}
\item 
For any $0\leq t < t' < T$, we have
\begin{equation} 
 \lim_{\eps \to 0}  \eps^{-1} \big| {\rm Cov} 
\big( \sigma_t^\eps \vartheta^\eps_{t},\sigma_{t'}^\eps \vartheta^\eps_{t'}\big) \big| =0.
\end{equation}
\end{enumerate}
\end{lemma}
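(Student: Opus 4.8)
The plan is to work directly from the representation $\vartheta^\eps_t = \sigma_{\rm ou} \int_t^T \EE[G'(Z_s^\eps)|{\cal F}_t] {\cal K}^\eps(s-t)\,ds$ given in Lemma \ref{lem:1}, and to exploit the Gaussian structure of the fOU process together with the scaling ${\cal K}^\eps(u) = \eps^{-1/2}{\cal K}(u/\eps)$ and properties (i)--(iii) of the kernel. For item 1, since $F$ is bounded with bounded derivatives, $G' = FF'$ is bounded, so $|\EE[G'(Z_s^\eps)|{\cal F}_t]| \le \|FF'\|_\infty$; together with $|\sigma_t^\eps| \le \|F\|_\infty$ and the substitution $u = (s-t)/\eps$ in $\int_t^T {\cal K}^\eps(s-t)\,ds = \sqrt{\eps}\int_0^{(T-t)/\eps}{\cal K}(u)\,du \le \sqrt{\eps}\,\|{\cal K}\|_{L^1(0,\infty)}$, this gives the almost-sure bound $|\sigma_t^\eps \vartheta^\eps_t| \le K_T \sqrt{\eps}$ with $K_T = \sigma_{\rm ou}\|F\|_\infty\|FF'\|_\infty\|{\cal K}\|_{L^1}$. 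This is the easy part and requires only property (i).

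For item 2, I would compute $\EE[\sigma_t^\eps \vartheta^\eps_t] = \sigma_{\rm ou}\int_t^T \EE[F(Z_t^\eps)G'(Z_s^\eps)]{\cal K}^\eps(s-t)\,ds$, using the tower property to drop the conditional expectation. Because $\sigma_{\rm ou}^{-1}(Z_t^\eps, Z_s^\eps)$ is a mean-zero Gaussian vector with correlation ${\cal C}_Z((s-t)/\eps)$, the inner expectation is exactly $\iint_{\RR^2} F(\sigma_{\rm ou}z)(FF')(\sigma_{\rm ou}z')\,p_{{\cal C}_Z((s-t)/\eps)}(z,z')\,dz\,dz'$, which is a bounded function of $(s-t)/\eps$; call it $\Phi((s-t)/\eps)$. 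After the substitution $u=(s-t)/\eps$, $\EE[\sigma_t^\eps \vartheta^\eps_t] = \sqrt{\eps}\,\sigma_{\rm ou}\int_0^{(T-t)/\eps}\Phi(u){\cal K}(u)\,du$. Comparing with the definition (\ref{def:DtT}) of $\overline D = \sigma_{\rm ou}\int_0^\infty \Phi(u){\cal K}(u)\,du$, the remainder is $\widetilde D_t^\eps = -\sqrt{\eps}\,\sigma_{\rm ou}\int_{(T-t)/\eps}^\infty \Phi(u){\cal K}(u)\,du$. Boundedness of $\eps^{-1/2}|\widetilde D_t^\eps|$ uniformly in $t$ and $\eps$ follows from $\Phi$ bounded and ${\cal K}\in L^1$; and for fixed $t<T$, the tail integral $\int_{(T-t)/\eps}^\infty$ tends to $0$ as $\eps\to 0$, giving $\eps^{-1/2}|\widetilde D_t^\eps|\to 0$ (one can make the decay rate explicit using property (iii), ${\cal K}(u)\sim c\,u^{H-3/2}$, so the tail is $O(\eps^{1/2-H})$, but the qualitative statement suffices).

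Item 3 is the main obstacle. I would write $\sigma_t^\eps\vartheta_t^\eps\cdot\sigma_{t'}^\eps\vartheta_{t'}^\eps$ as a double integral over $s\in[t,T]$, $s'\in[t',T]$ of the product of two conditional expectations against the kernels ${\cal K}^\eps(s-t){\cal K}^\eps(s'-t')$, and then take the expectation. The key point is that for $t<t'$ fixed, the four times $t,t',s,s'$ relevant to a nonnegligible contribution all cluster within $O(\eps)$ of $t$ and of $t'$ respectively (because ${\cal K}^\eps$ concentrates on scale $\eps$), so the two clusters are separated by the fixed gap $t'-t>0$, across which the fOU correlation ${\cal C}_Z$ decays. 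Concretely I expect to bound the covariance by splitting off the product of means (controlled by item 2) and estimating the centered part using the fact that ${\rm Cov}$ of smooth bounded functions of the Gaussian vector $(Z_t^\eps,Z_s^\eps,Z_{t'}^\eps,Z_{s'}^\eps)$ is controlled by the cross-correlations between the $\{t,s\}$-block and the $\{t',s'\}$-block, each of which is ${\cal C}_Z(\text{offset}/\eps)$ with offset at least $t'-t-O(\eps)$; since ${\cal C}_Z(r)\to 0$ as $r\to\infty$ (indeed ${\cal C}_Z(r)=O(r^{2H-2})$ by (\ref{eq:corZG3b})), this cross-correlation is $o(1)$ while the remaining integrals against ${\cal K}^\eps$ each contribute a factor $\sqrt{\eps}$. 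Multiplying, $|{\rm Cov}(\sigma_t^\eps\vartheta_t^\eps,\sigma_{t'}^\eps\vartheta_{t'}^\eps)| = o(\eps)$, which is the claim. The technical care needed is a quantitative estimate of how a covariance of functions of a Gaussian vector is dominated by the off-diagonal block of the covariance matrix — a Gaussian interpolation (Price/Slepian-type) argument — together with a careful bookkeeping of the contribution from the region $s\in[t,t']$ where the clusters could in principle overlap; but that region has the kernel ${\cal K}^\eps(s-t)$ evaluated at arguments $\ge t'-t$, hence is exponentially/polynomially small and handled as in item 2's tail estimate.
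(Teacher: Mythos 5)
Your items 1 and 2 are correct and coincide with the paper's argument: the almost-sure bound follows from the boundedness of $F$ and of $G'=FF'$ together with $\int_0^\infty|{\cal K}^\eps(s)|\,ds=\sqrt{\eps}\,\|{\cal K}\|_{L^1}$, and your identification of $\overline D$ via the bivariate Gaussian density plus the tail estimate $|\widetilde D^\eps_t|\leq C\eps^{1/2}\int_{(T-t)/\eps}^\infty|{\cal K}(u)|\,du$ is exactly the paper's computation for item 2.

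Item 3 has a gap in the key step. Writing $\sigma_t^\eps\vartheta^\eps_t=\sigma_{\rm ou}\int_t^T\EE[F(Z_t^\eps)G'(Z_s^\eps)|{\cal F}_t]\,{\cal K}^\eps(s-t)\,ds$, the integrand of the covariance is ${\rm Cov}\big(\EE[F(Z^\eps_t)G'(Z^\eps_s)|{\cal F}_t],\,\EE[F(Z^\eps_{t'})G'(Z^\eps_{s'})|{\cal F}_{t'}]\big)$, a covariance of \emph{conditional expectations}, and these are not functions of the Gaussian vector $(Z^\eps_t,Z^\eps_s,Z^\eps_{t'},Z^\eps_{s'})$: decomposing $Z^\eps_\tau=A^\eps_{t\tau}+B^\eps_{t\tau}$ with $A^\eps_{t\tau}=\sigma_{\rm ou}\int_{-\infty}^t{\cal K}^\eps(\tau-u)\,dW_u$ adapted to ${\cal F}_t$ and $B^\eps_{t\tau}$ independent of ${\cal F}_t$, one has $\EE[G'(Z^\eps_s)|{\cal F}_t]=\int_\RR G'(A^\eps_{ts}+\sigma^\eps_{0,s-t}z)p(z)\,dz$, a function of $A^\eps_{ts}$ rather than of $Z^\eps_s$. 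The Price/Slepian-type interpolation you invoke, applied to $(Z^\eps_t,Z^\eps_s,Z^\eps_{t'},Z^\eps_{s'})$, therefore bounds ${\rm Cov}(F(Z^\eps_t)G'(Z^\eps_s),F(Z^\eps_{t'})G'(Z^\eps_{s'}))$, which differs from the required quantity by $\EE[{\rm Cov}(\,\cdot\,,\,\cdot\,|{\cal F}_t)]$ terms that you have not controlled. The argument is repairable by applying the interpolation to $(Z^\eps_t,A^\eps_{ts},Z^\eps_{t'},A^\eps_{t's'})$, whose cross-block covariances are controlled by $\sigma^\eps_{t'-t,\infty}\leq C(1\wedge(\eps/(t'-t))^{1-H})$, but that step is missing. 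The paper avoids interpolation altogether: since the first conditional expectation is ${\cal F}_t$-measurable and bounded, Cauchy--Schwarz reduces the problem to bounding ${\rm Var}\big(\EE[F(Z^\eps_{t'})G'(Z^\eps_{s'})|{\cal F}_t]\big)$, which is then estimated by an independent-copy argument, the Lipschitz property of $F$ and $G'$, and the bound $\EE[(A^\eps_{t\tau})^2]\leq(\sigma^\eps_{\tau-t,\infty})^2$ from Lemma \ref{lem:7}, giving the quantitative estimate $|{\rm Cov}(\sigma^\eps_t\vartheta^\eps_t,\sigma^\eps_{t'}\vartheta^\eps_{t'})|\leq C\eps\,(1\wedge(\eps/(t'-t)))^{(1-H)/2}$. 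Your surrounding bookkeeping (the two kernel integrals each contributing $\sqrt{\eps}$, and the near-overlap region being negligible because ${\cal K}^\eps(s-t)$ is evaluated at arguments of order $t'-t\gg\eps$ there) is correct and matches the paper.
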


\begin{proof}
%The first item comes from the expression (\ref{def:varthetaeps}) of $\vartheta^\eps_t$ and
%the fact that ${\cal K}^\eps(t)={\cal K}(t/\eps)/\sqrt{\eps}$, ${\cal K} \in L^1(0,\infty)$ and $G'$ is bounded.
Using the expression (\ref{def:varthetaeps}) of $\vartheta^\eps_t$:
$$
\big| \vartheta^\eps_t \sigma^\eps_t \big| \leq \sigma_{\rm ou} \|F\|_\infty \|G'\|_\infty \int_0^\infty |{\cal K}^\eps(s)| ds
$$
The proof of the first item follows from
the fact that ${\cal K}^\eps(t)={\cal K}(t/\eps)/\sqrt{\eps}$, ${\cal K} \in L^1(0,\infty)$.

The expectation of $  \sigma_t^\eps\vartheta^\eps_{t}  $ is equal to
\begin{eqnarray*}
 \EE\big[ \sigma_t^\eps\vartheta^\eps_{t}  \big ] 
&=& \sigma_{{\rm ou}}   \int_t^{T} \EE\big[ F(Z^\eps_t) G'(Z_s^\eps) \big]  {\cal K}^\eps(s - t) ds \\
&=& \sigma_{{\rm ou}} \eps^{1/2} \int_0^{(T-t)/\eps} \EE\big[ F(Z_0^\eps) G'(Z_{\eps s}^\eps) \big] {\cal K}(s) ds \\
&=&  \sigma_{{\rm ou}} \eps^{1/2} \int_0^{(T-t)/\eps}
\Big[\iint_{\RR^2} F(\sigma_{{\rm ou}} z )  G'(\sigma_{{\rm ou}} z') p_{{\cal C}_Z(s)} (z,z') d z dz'\Big]
 {\cal K}(s) ds,
\end{eqnarray*}
with  $p_C$   defined in Proposition \ref{prop:main}.

Therefore the difference
$$
 \EE\big[ \sigma_t^\eps \vartheta^\eps_{t}  \big ]  - \eps^{1/2} \overline{D}
=
\sigma_{{\rm ou}}  \eps^{1/2} \int_{(T-t)/\eps}^\infty \EE\big[ F(Z_0^\eps) G'(Z_{\eps s}^\eps)\big] {\cal K}(s) ds
$$
can be bounded by
\begin{equation}
\big| \EE\big[  \sigma_t^\eps\vartheta^\eps_{t}  \big ]   -  \eps^{1/2} \overline{D} \big| 
\leq 
\|F\|_\infty \|G'\|_\infty \sigma_{{\rm ou}}  \eps^{1/2} \int_{(T-t)/\eps}^\infty |{\cal K}(s)| ds,
\end{equation}
which gives the second item since ${\cal K} \in L^1(0,\infty)$.

Let us consider $0 \leq t \leq t'\leq T$. We have
\begin{eqnarray*}
 \EE\big[ \sigma_t^\eps\vartheta^\eps_{t}\sigma_{t'}^\eps \vartheta^\eps_{t'}  \big] 
&=& \sigma_{{\rm ou}}^2 \int_t^{T} ds   {\cal K}^\eps(s-t)  \int_{t'}^T ds'  {\cal K}^\eps(s' -t')  \\
&&\times
\EE\Big[ 
\EE\big[ F(Z^\eps_t) G'(Z_s^\eps)|{\cal F}_t \big]  \EE\big[ F(Z^\eps_{t'}) G'(Z_{s'}^\eps)|{\cal F}_{t'} \big] \Big]  ,
\end{eqnarray*}
so we can write
\begin{eqnarray*}
{\rm Cov} \big( \sigma_t^\eps\vartheta^\eps_{t} , \sigma_{t'}^\eps \vartheta^\eps_{t'}  \big)
&=& \sigma_{{\rm ou}}^2 \int_t^{T} ds   {\cal K}^\eps(s-t)  \int_{t'}^T ds'  {\cal K}^\eps(s' - t')  \\
&&\times
\Big(
\EE\Big[ 
\EE\big[ F(Z^\eps_t) G'(Z_s^\eps)|{\cal F}_t \big]  \EE\big[ F(Z^\eps_{t'}) G'(Z_{s'}^\eps)|{\cal F}_{t} \big] \Big]  \\
&&\quad - 
\EE\Big[ 
\EE\big[ F(Z^\eps_t) G'(Z_s^\eps)|{\cal F}_t \big]  \EE\big[ F(Z^\eps_{t'}) G'(Z_{s'}^\eps)  \big] \Big] \Big)  ,
\end{eqnarray*}
and therefore
\begin{eqnarray*}
\big| {\rm Cov} \big( \sigma_t^\eps\vartheta^\eps_{t} , \sigma_{t'}^\eps \vartheta^\eps_{t'}  \big) \big|
&\leq & \sigma_{{\rm ou}}^2 \|F\|_\infty \|G'\|_\infty \int_t^{T} ds   |{\cal K}^\eps(s-t)| \int_{t'}^T ds'  |{\cal K}^\eps(s'-t')| \\
&&\times
\EE\Big[ \big( \EE\big[ F(Z^\eps_{t'}) G'(Z_{s'}^\eps)|{\cal F}_{t} \big] -   \EE\big[ F(Z^\eps_{t'}) G'(Z_{s'}^\eps) \big] \big)^2 \Big]^{1/2} .
\end{eqnarray*}
We can write for any $\tau>t$:
\begin{eqnarray*}
&&Z^\eps_{\tau} = A^\eps_{t\tau} +  B^\eps_{t\tau} ,\quad  A^\eps_{t\tau} = \sigma_{\rm ou} \int_{-\infty}^t {\cal K}^\eps(\tau-u)dW_u, \quad 
 B^\eps_{t\tau} =  \sigma_{\rm ou}\int_t^{\tau} {\cal K}^\eps(\tau-u)dW_u ,
\end{eqnarray*}
where $A^\eps_{t\tau}$ is ${\cal F}_t$ adapted while $B^\eps_{t\tau}$ is independent from ${\cal F}_t$.
Therefore ($s'\geq t'\geq t$)
\begin{eqnarray*}
&&\EE\Big[ \big( \EE\big[ F(Z^\eps_{t'}) G'(Z_{s'}^\eps)|{\cal F}_{t} \big] -   \EE\big[ F(Z^\eps_{t'}) G'(Z_{s'}^\eps)  \big]\big)^2 \Big]\\
&&= \EE\Big[   \EE\big[ F(Z^\eps_{t'}) G'(Z_{s'}^\eps)|{\cal F}_{t} \big]^2 \Big] -   \EE\big[ F(Z^\eps_{t'}) G'(Z_{s'}^\eps)  \big]^2 \\
&&
=
\EE\Big[
F( A^\eps_{tt'} +  B^\eps_{tt'}) G'( A^\eps_{ts'} +  B^\eps_{ts'})
F( A^\eps_{tt'} +  \tilde{B}^\eps_{tt'}) G'( A^\eps_{ts'} +  \tilde{B}^\eps_{ts'})
\\
&& \quad -
F( A^\eps_{tt'} +  B^\eps_{tt'}) G'( A^\eps_{ts'} +  B^\eps_{ts'})
F( \tilde{A}^\eps_{tt'} +  \tilde{B}^\eps_{tt'}) G'( \tilde{A}^\eps_{ts'} +  \tilde{B}^\eps_{ts'})
\Big] ,
\end{eqnarray*}
where $(\tilde{A}^\eps_{tt'} ,  \tilde{B}^\eps_{tt'},  \tilde{A}^\eps_{ts'} ,  \tilde{B}^\eps_{ts'})$ 
is an independent copy of $({A}^\eps_{tt'} , {B}^\eps_{tt'},  {A}^\eps_{ts'} ,  {B}^\eps_{ts'})$.
We can then write
\begin{eqnarray*}
&&\EE\Big[ \big( \EE\big[ F(Z^\eps_{t'}) G'(Z_{s'}^\eps)|{\cal F}_{t} \big] -   \EE\big[ F(Z^\eps_{t'}) G'(Z_{s'}^\eps)  \big]\big)^2 \Big]\\
&&
\leq \|F\|_\infty \|G'\|_\infty
\EE\Big[
\big( F( A^\eps_{tt'} +  \tilde{B}^\eps_{tt'}) G'( A^\eps_{ts'} +  \tilde{B}^\eps_{ts'})
- F( \tilde{A}^\eps_{tt'} +  \tilde{B}^\eps_{tt'}) G'( \tilde{A}^\eps_{ts'} +  \tilde{B}^\eps_{ts'}) \big)^2
\Big]^{1/2}  \\
&& \leq C \Big( \EE \big[ (A^\eps_{tt'}  - \tilde{A}^\eps_{tt'})^2 \big]^{1/2}+
\EE \big[ (A^\eps_{ts'}  - \tilde{A}^\eps_{ts'})^2 \big]^{1/2}\Big) \\
&& \leq 2 C \Big( \EE \big[ (A^\eps_{tt'}  )^2 \big]^{1/2}+
\EE \big[ (A^\eps_{ts'} )^2 \big]^{1/2}\Big) \\ 
&& 
\leq 2 C \Big[  \Big(  \sigma_{\rm ou}^2 \int_{-\infty}^t {\cal K}^\eps(t'-u)^2 du \Big)^{1/2} 
+
 \Big(  \sigma_{\rm ou}^2 \int_{-\infty}^t {\cal K}^\eps(s'-u)^2 du \Big)^{1/2} \Big]   \\ && 
 \leq 4C \sigma^\eps_{t'-t,\infty}  \leq  {C_1}   \big( 1 \wedge (\eps/(t'-t))^{1-H}\big)  , 
\end{eqnarray*}  
where we used Lemma  \ref{lem:7} in the last inequality. 
 Then, using the fact that ${\cal K}\in L^1$, this gives 
\begin{eqnarray*}
\big| {\rm Cov} \big( \sigma_t^\eps\vartheta^\eps_{t} , \sigma_{t'}^\eps \vartheta^\eps_{t'}  \big) \big|
&\leq & C_2 \int_t^{T} ds   |{\cal K}^\eps(s-t)| \int_{t'}^T ds'  |{\cal K}^\eps(s'-t')|  \big( 1 \wedge (\eps/(t'-t)))^{(1-H)/2}\big)\\
&\leq & C_3 \eps \big( 1 \wedge (\eps/(t'-t)))^{(1-H)/2}\big) ,
\end{eqnarray*} 
which proves the third item.
\end{proof}

\begin{lemma}
\label{lem:0}%
For any smooth function $f$ with bounded derivative, 
we have
\begin{equation}
{\rm Var} \big( \EE \big[ f(Z_{t}^\eps) |{\cal F}_0 \big]\big)
\leq  \|f'\|_\infty^2  (\sigma_{t,\infty}^\eps)^2  .
\label{eq:bornvarcond1}
\end{equation}
\end{lemma}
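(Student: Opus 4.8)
The plan is to split $Z_t^\eps$ into the part adapted to ${\cal F}_0$ and the part independent of it, exactly as in the splitting used in the proof of Lemma~\ref{lem:2}, and then to reduce the claim to the elementary variance bound for a Lipschitz function of a single real random variable. Using the moving-average representation (\ref{eq:fOU2}), write for $t>0$
\[
Z_t^\eps = A_t^\eps + B_t^\eps , \qquad A_t^\eps = \sigma_{\rm ou} \int_{-\infty}^0 {\cal K}^\eps(t-u)\, dW_u , \qquad B_t^\eps = \sigma_{\rm ou} \int_0^t {\cal K}^\eps(t-u)\, dW_u ,
\]
so that $A_t^\eps$ is ${\cal F}_0$-measurable while $B_t^\eps$ is independent of ${\cal F}_0$. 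Since ${\cal K} \in L^2(0,\infty)$, the variable $A_t^\eps$ is a centered Gaussian variable (Gaussianity will not actually be needed) with
\[
{\rm Var}\big( A_t^\eps \big) = \sigma_{\rm ou}^2 \int_{-\infty}^0 {\cal K}^\eps(t-u)^2\, du = \sigma_{\rm ou}^2 \int_t^\infty {\cal K}^\eps(v)^2\, dv = (\sigma_{t,\infty}^\eps)^2 ,
\]
the last identity matching the quantity $\sigma_{t,\infty}^\eps$ of Lemma~\ref{lem:7}.

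By the independence of $B_t^\eps$ from ${\cal F}_0$ one then has $\EE[ f(Z_t^\eps) \,|\, {\cal F}_0 ] = g(A_t^\eps)$, where $g(a) = \EE[ f(a + B_t^\eps) ]$. Since $f$ is smooth with bounded derivative it is Lipschitz with constant $\|f'\|_\infty$, and hence so is $g$:
\[
| g(a) - g(a') | \le \EE\big[ | f(a + B_t^\eps) - f(a' + B_t^\eps) | \big] \le \|f'\|_\infty\, | a - a' | .
\]

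Finally I would use the standard tensorization bound for a Lipschitz function of a square-integrable random variable: letting $\widetilde{A}_t^\eps$ be an independent copy of $A_t^\eps$,
\[
{\rm Var}\big( g(A_t^\eps) \big) = \frac{1}{2} \EE\big[ \big( g(A_t^\eps) - g(\widetilde{A}_t^\eps) \big)^2 \big] \le \frac{1}{2} \|f'\|_\infty^2\, \EE\big[ \big( A_t^\eps - \widetilde{A}_t^\eps \big)^2 \big] = \|f'\|_\infty^2\, {\rm Var}\big( A_t^\eps \big) .
\]
Combining the three displays yields ${\rm Var}( \EE[ f(Z_t^\eps) \,|\, {\cal F}_0 ] ) = {\rm Var}( g(A_t^\eps) ) \le \|f'\|_\infty^2 (\sigma_{t,\infty}^\eps)^2$, which is (\ref{eq:bornvarcond1}). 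The argument is essentially routine; the only points deserving a line of care are the measurability/independence of the two pieces of $Z_t^\eps$ (already exploited in the proof of Lemma~\ref{lem:2}) and, more cosmetically, identifying ${\rm Var}(A_t^\eps)$ with the quantity denoted $(\sigma_{t,\infty}^\eps)^2$. The case $t=0$ is trivial, since then $Z_0^\eps$ is ${\cal F}_0$-measurable and $(\sigma_{0,\infty}^\eps)^2 = \sigma_{\rm ou}^2 = {\rm Var}(Z_0^\eps)$.
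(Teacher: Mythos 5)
Your proposal is correct and follows essentially the same route as the paper: the paper likewise decomposes $Z_t^\eps$ into the ${\cal F}_0$-measurable conditional mean (whose variance is $(\sigma_{t,\infty}^\eps)^2$) plus an independent Gaussian remainder, observes that the conditional expectation is a Lipschitz function of the former with constant $\|f'\|_\infty$, and bounds the variance by symmetrization over an independent copy. The only cosmetic difference is that the paper writes the argument through explicit Gaussian integrals, whereas you phrase it abstractly via $g(a)=\EE[f(a+B_t^\eps)]$; the content is identical.
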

\begin{proof}
The conditional distribution of $Z_t^\eps$ given ${\cal F}_0$ is Gaussian with mean
$$
\EE \big[  Z_t^\eps |{\cal F}_0 \big] = \sigma_{{\rm ou}}  \int_{-\infty}^0 {\cal K}^\eps(t-u) dW_u
$$
and variance
$$
 {\rm Var} \big( Z_t^\eps |{\cal F}_0\big) = (\sigma_{0,t}^\eps)^2 =\sigma_{{\rm ou}}^2 \int_0^{t} {\cal K}^\eps(u)^2 du  .
$$
Therefore
$$
{\rm Var} \big( \EE \big[ f(Z_{t}^\eps) |{\cal F}_0 \big]\big)
=
{\rm Var} \Big( \int_\RR f\big(\EE \big[ Z_{t}^\eps |{\cal F}_0 \big]  +\sigma_{0,t}^\eps z \big) p(z) dz \Big)  ,
$$
where $p(z)$ is the pdf of the standard normal distribution.
The random variable $\EE \big[ Z_{t}^\eps |{\cal F}_0 \big] $ is Gaussian with mean zero and variance
$(\sigma_{t,\infty}^\eps)^2$
so that
\begin{eqnarray*}
\nonumber
{\rm Var} \big( \EE \big[ f(Z_{t}^\eps) |{\cal F}_0 \big]\big)
&=&
 \frac{1}{2} \int_\RR \int_\RR dz dz' p(z) p(z') \int_\RR \int_\RR du du' p(u) p(u') \\
\nonumber &&\times 
\Big[   f\big(\sigma_{t,\infty}^\eps u +\sigma_{0,t}^\eps z \big) - 
f\big(\sigma_{t,\infty}^\eps u' +\sigma_{0,t}^\eps z \big)\Big] \\
\nonumber&& \times
\Big[   f\big(\sigma_{t,\infty}^\eps u +\sigma_{0,t}^\eps z' \big) - 
f\big(\sigma_{t,\infty}^\eps u' +\sigma_{0,t}^\eps z' \big)\Big] \\
\nonumber&\leq & \|f'\|_\infty^2 (\sigma_{t,\infty}^\eps)^2
\frac{1}{2} \int_\RR \int_\RR du du' p(u) p(u')(u-u')^2 \\
& =  & \|f'\|_\infty^2  (\sigma_{t,\infty}^\eps)^2  ,
\end{eqnarray*}
which is the desired result.
\end{proof}

The random term $\phi^\eps_{t}$ defined by (\ref{def:phit}) has the form 
\begin{equation}
\phi_{t}^\eps = 
\EE \Big[    \int_t^T G (Z_s^\eps)  ds \big| {\cal F}_t \Big] ,
\end{equation}
with $G$ defined in (\ref{def:G}). 
\begin{lemma}
\label{lem:3}%
For any $t \leq T$, $\phi_{t}^\eps$ is a zero-mean random variable with standard deviation of order $\eps^{1-H}$:
\begin{equation}
\label{eq:stdev}
\sup_{\eps \in (0,1]}
\sup_{t \in [0,T]}\eps^{2H-2} 
\EE [ (\phi_{t}^\eps)^2]  < \infty .
\end{equation}
\end{lemma}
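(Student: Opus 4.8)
The plan is to split the claim into the trivial zero-mean part and the quantitative second-moment bound, and to reduce the latter to the conditional-variance estimate of Lemma~\ref{lem:0} together with the tail $L^2$-mass estimate for the kernel $\mathcal K^\eps$ already used in the proof of Lemma~\ref{lem:2}. That $\phi^\eps_t$ is zero-mean is immediate: by Fubini and stationarity $\EE[\phi^\eps_t]=\int_t^T\EE[G(Z^\eps_s)]\,ds$, and $\EE[G(Z^\eps_s)]=\tfrac12\big(\EE[F(Z^\eps_s)^2]-\overline\sigma^2\big)=\tfrac12(\langle F^2\rangle-\overline\sigma^2)=0$ by the definition (\ref{def:barsigma}) of $\overline\sigma$. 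In particular $\phi^\eps_T=0$, so the endpoint $t=T$ in (\ref{eq:stdev}) is trivial.

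For the second moment I would write $\phi^\eps_t=\int_t^T\EE[G(Z^\eps_s)\,|\,\mathcal F_t]\,ds$ and expand
\[
\EE\big[(\phi^\eps_t)^2\big]=\int_t^T\!\!\int_t^T\EE\Big[\EE[G(Z^\eps_s)|\mathcal F_t]\,\EE[G(Z^\eps_{s'})|\mathcal F_t]\Big]\,ds\,ds'.
\]
Each factor $\EE[G(Z^\eps_s)|\mathcal F_t]$ has mean zero, so Cauchy--Schwarz bounds the integrand by ${\rm Var}\big(\EE[G(Z^\eps_s)|\mathcal F_t]\big)^{1/2}{\rm Var}\big(\EE[G(Z^\eps_{s'})|\mathcal F_t]\big)^{1/2}$. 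By stationarity of the increments of $W$, Lemma~\ref{lem:0} holds verbatim with the conditioning time $0$ replaced by $t$ and the elapsed time equal to $s-t$, and applies with $f=G$, whose derivative $G'=FF'$ is bounded because $F$ is bounded with bounded derivatives; this gives ${\rm Var}\big(\EE[G(Z^\eps_s)|\mathcal F_t]\big)\le\|FF'\|_\infty^2\,(\sigma^\eps_{s-t,\infty})^2$. Hence
\[
\EE\big[(\phi^\eps_t)^2\big]\le\|FF'\|_\infty^2\Big(\int_t^T\sigma^\eps_{s-t,\infty}\,ds\Big)^2=\|FF'\|_\infty^2\Big(\int_0^{T-t}\sigma^\eps_{\tau,\infty}\,d\tau\Big)^2.
\]

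It then remains to estimate $\int_0^{T-t}\sigma^\eps_{\tau,\infty}\,d\tau$. Using $\sigma^\eps_{\tau,\infty}\le C\big(1\wedge(\eps/\tau)^{1-H}\big)$ — the bound invoked via Lemma~\ref{lem:7} in the proof of Lemma~\ref{lem:2}, which follows from $\int_0^\infty\mathcal K^2=1$ for $\tau\le\eps$ and from the large-time behaviour $\mathcal K(v)\sim c\,v^{H-3/2}$ for $\tau>\eps$ — I would split the integral at $\tau=\eps$:
\[
\int_0^{T-t}\sigma^\eps_{\tau,\infty}\,d\tau\le C\Big(\int_0^\eps d\tau+\eps^{1-H}\!\!\int_\eps^{T}\tau^{H-1}\,d\tau\Big)\le C\Big(\eps+\tfrac{1}{H}\eps^{1-H}T^H\Big)\le C'\,\eps^{1-H},
\]
uniformly in $t\in[0,T]$ and $\eps\in(0,1]$, the last inequality because $H\in(0,1/2)$ forces $\eps\le\eps^{1-H}$ for $\eps\le1$. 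Squaring gives $\EE[(\phi^\eps_t)^2]\le C''\eps^{2-2H}$ uniformly in $t$ and $\eps$, which is exactly (\ref{eq:stdev}).

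\textbf{Main obstacle.} The only substantive ingredient is the sharp tail estimate $\sigma^\eps_{\tau,\infty}\lesssim(\eps/\tau)^{1-H}$ on the $L^2$-mass of $\mathcal K^\eps$ beyond time $\tau$ (Lemma~\ref{lem:7}); this is where the roughness exponent $H$ enters, and it is what upgrades the qualitative $o(1)$ decay of the conditional variance of $Z^\eps$ into the quantitative rate $\eps^{1-H}$. Once that is in hand the remainder is the elementary integral above, where it is worth noting that the small-$\tau$ window of length $\eps$ contributes only the subleading $O(\eps)$ term, harmlessly absorbed because $H<1/2$.
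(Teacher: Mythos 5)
Your proof is correct and follows essentially the same route as the paper: expand $\EE[(\phi^\eps_t)^2]$ as a double integral of covariances of the conditional expectations, bound each by Cauchy--Schwarz and the conditional-variance estimate of Lemma~\ref{lem:0} (applied to $f=G$ with $G'=FF'$ bounded, shifted to time $t$ by stationarity), and then integrate the tail bound $\sigma^\eps_{\tau,\infty}\lesssim 1\wedge(\eps/\tau)^{1-H}$ of Lemma~\ref{lem:7} to obtain $O(\eps^{1-H})$, using $H<1/2$ to absorb the $O(\eps)$ contribution from the window $\tau\le\eps$. The paper states the final integral bound as $C_T(\eps+\eps^{1-H})^2$ without detailing the split at $\tau=\eps$, which you carry out explicitly; otherwise the arguments coincide.
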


\begin{proof}
For $t\in [0,T]$  the second moment of $\phi_{t}^\eps$ is:
\begin{eqnarray*}
\EE \big[ (\phi_{t}^\eps )^2\big] &=&  
\EE\Big[  
\EE \Big[    \int_t^T G (Z_s^\eps)  ds \big| {\cal F}_t \Big]^2
\Big]\\
&=&
\int_0^{T-t} ds \int_{0}^{T-t} ds' 
{\rm Cov}\big( \EE \big[ G(Z_s^\eps)|{\cal F}_0\big] ,\EE \big[ G(Z_{s'}^\eps)|{\cal F}_0\big] \big) .
\end{eqnarray*}
We have by Lemma \ref{lem:0}
\begin{eqnarray*}
\EE \big[ (\phi_{t}^\eps )^2\big]  &\leq &  \Big( \int_0^{T-t} ds 
{\rm Var}\big( \EE \big[ G(Z_s^\eps)|{\cal F}_0\big]\big)^{1/2}
\Big)^2\leq  \|G'\|_\infty^2\Big(  \int_0^{T-t} ds  \sigma_{s,\infty}^\eps  \Big)^2 .
\end{eqnarray*}
In view of Lemma \ref{lem:7} we then have
$$
\EE \big[ (\phi_{t}^\eps )^2\big]\leq C_{T} \big( \eps+\eps^{1-H}\big)^2 \leq 4 C_{T} \eps^{2-2H} ,
$$
uniformly in $t  \leq T$ and $\eps \in (0,1]$ for some constant $C_{T}$.
\end{proof}

\begin{lemma}
\label{lem:6}
Let us define for any $t \in [0,T]$: 
\begin{equation}
{\kappa}^\eps_t = \frac{\eps^{1/2}}{2} \int_0^t \big( (\sigma_s^\eps)^2  -\overline{\sigma}^2\big)  ds
   = \eps^{1/2} \int_0^t G(Z^\eps_s) ds
,
\end{equation} 
as in (\ref{def:kappaeps}).
We have   
\begin{equation}
\lim_{\eps\to 0}
\sup_{t\in [0,T]}  \eps^{-1/2} \EE \big[ ( {\kappa}^\eps_t )^2\big]^{1/2} = 0.
\end{equation} 
\end{lemma}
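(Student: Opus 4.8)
The plan is to reduce the estimate to the integrability of the correlation function ${\cal C}_Z$ recorded in Section \ref{sec:fou}. First I would note that $\eps^{-1/2}\kappa^\eps_t = \int_0^t G(Z^\eps_s)\,ds$, so the claim is equivalent to $\sup_{t\in[0,T]}\EE\big[\big(\int_0^t G(Z^\eps_s)\,ds\big)^2\big]\to 0$ as $\eps\to0$. Since $\overline{\sigma}^2=\langle F^2\rangle$ by (\ref{def:barsigma}), the stationary random variable $G(Z^\eps_s)$ has mean zero, so Fubini gives
$$
\EE\Big[\Big(\int_0^t G(Z^\eps_s)\,ds\Big)^2\Big] = \int_0^t\!\!\int_0^t {\rm Cov}\big(G(Z^\eps_s),G(Z^\eps_{s'})\big)\,ds\,ds' .
$$

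The main step is to bound the integrand by $|{\cal C}_Z(|s-s'|/\eps)|$ up to a constant. The vector $\sigma_{{\rm ou}}^{-1}(Z^\eps_s,Z^\eps_{s'})$ is centred Gaussian with unit variances and correlation ${\cal C}_Z(|s-s'|/\eps)$; since $z\mapsto G(\sigma_{{\rm ou}}z)$ is Lipschitz with derivative bounded by $\sigma_{{\rm ou}}\|G'\|_\infty$ (recall $G'=FF'$ is bounded because $F$ is bounded with bounded derivatives), the Gaussian interpolation (Price) identity gives
$$
\big|{\rm Cov}\big(G(Z^\eps_s),G(Z^\eps_{s'})\big)\big| \le \sigma_{{\rm ou}}^2\,\|G'\|_\infty^2\,\big|{\cal C}_Z\big(|s-s'|/\eps\big)\big| ;
$$
equivalently this is the quantitative form of the $C\to0$ expansion of the function $\Psi$ in the proof of Lemma \ref{prop:process} (whose analogue for $G$ has zero Gaussian mean). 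A change of variables then yields $\int_0^t\!\int_0^t|{\cal C}_Z(|s-s'|/\eps)|\,ds\,ds' = 2\int_0^t (t-u)\,|{\cal C}_Z(u/\eps)|\,du \le 2T\eps\int_0^\infty|{\cal C}_Z(v)|\,dv$.

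It remains to invoke that ${\cal C}_Z\in L^1(0,\infty)$ for $H\in(0,1/2)$, which holds because ${\cal C}_Z$ is continuous with ${\cal C}_Z(0)=1$ and, by (\ref{eq:corZG3b}), ${\cal C}_Z(v)\sim \frac{1}{\Gamma(2H-1)}v^{2H-2}$ as $v\to\infty$ with $2H-2<-1$. Collecting the bounds, $\sup_{t\in[0,T]}\EE[(\int_0^t G(Z^\eps_s)\,ds)^2]\le C_T\,\eps$ for a constant $C_T$ depending only on $T$, $\|G'\|_\infty$, $\sigma_{{\rm ou}}$ and $\|{\cal C}_Z\|_{L^1(0,\infty)}$, hence $\EE[(\kappa^\eps_t)^2]=\eps\,\EE[(\int_0^t G(Z^\eps_s)\,ds)^2]\le C_T\eps^2$ uniformly in $t\in[0,T]$, and therefore $\eps^{-1/2}\sup_{t\in[0,T]}\EE[(\kappa^\eps_t)^2]^{1/2}\le\sqrt{C_T}\,\eps^{1/2}\to0$. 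The only mildly technical point is the covariance bound; everything else is a direct consequence of ${\cal C}_Z$ being integrable, and — in contrast with Lemma \ref{lem:3} — no time-scale subtlety arises here since no conditioning is involved.
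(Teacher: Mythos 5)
Your proof is correct, but it takes a genuinely different route from the paper's. The paper also reduces to the double integral of $\mathrm{Cov}\big(G(Z^\eps_s),G(Z^\eps_{0})\big)$ by stationarity, but it then bounds the covariance via conditioning, $|\mathrm{Cov}(G(Z^\eps_s),G(Z^\eps_0))|\le \|G\|_\infty\,\mathrm{Var}\big(\EE[G(Z^\eps_s)|{\cal F}_0]\big)^{1/2}$, and invokes Lemma \ref{lem:0} together with the kernel-tail estimate of Lemma \ref{lem:7} to get $|\mathrm{Cov}|\le C\,\sigma^\eps_{s,\infty}\le C\big(1\wedge(\eps/s)^{1-H}\big)$; integrating yields $\EE[(\kappa^\eps_t)^2]\le C_T\,\eps^{2-H}$, which suffices. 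You instead bound the covariance directly by $\sigma_{\rm ou}^2\|G'\|_\infty^2\,|{\cal C}_Z(|s-s'|/\eps)|$ via the Price/Gaussian-interpolation identity and use the integrability of ${\cal C}_Z$ for $H\in(0,1/2)$ (which is exactly the short-range property (\ref{eq:corZG3b})). Your route is more self-contained at this point (no conditioning, no appeal to Lemmas \ref{lem:0} and \ref{lem:7}) and delivers the sharper rate $\EE[(\kappa^\eps_t)^2]\le C_T\,\eps^2$, i.e.\ $O(\eps)$ rather than $O(\eps^{(1-H)/2})$ for $\eps^{-1/2}\EE[(\kappa^\eps_t)^2]^{1/2}$; the price is that it leans on the integrability of the correlation function and so would not transfer to the long-range case $H>1/2$, whereas the paper's conditioning argument is the one that is reused throughout (e.g.\ in Lemma \ref{lem:3}, where conditioning on ${\cal F}_t$ is unavoidable). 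The covariance bound itself is sound: $z\mapsto G(\sigma_{\rm ou}z)$ is Lipschitz since $G'=FF'$ is bounded, and the Price identity gives $|\mathrm{Cov}(f(X),g(Y))|\le|\rho|\,\|f'\|_\infty\|g'\|_\infty$ for standard bivariate Gaussians with correlation $\rho$, which is also the quantitative content of the first-order term in the small-$C$ expansion of $\Psi$ in the proof of Lemma \ref{prop:process}, as you note.
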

\begin{proof}
Since the expectation $\EE [ G(Z^\eps_0)]=0$, 
we have
$$
\EE \big[ ( {\kappa}^\eps_t )^2\big] =  \eps 
\EE \Big[ \Big(  \int_0^t G(Z^\eps_s) ds \Big)^2\Big]
=
2{\eps} 
 \int_0^t ds (t-s) {\rm Cov} \big( G(Z^\eps_s) , G(Z^\eps_0) \big) ds .
$$
We have  moreover
\begin{eqnarray*}
\big| {\rm Cov} \big( G(Z^\eps_s) , G(Z^\eps_0) \big)\big|
&=&
\big| \EE\big[ \big( \EE[G(Z^\eps_s)|{\cal F}_0] -\EE[G(Z^\eps_s)]\big) G(Z^\eps_0) \big]\big|\\
&\leq &
\|G\|_\infty
{\rm Var} \big( \EE[G(Z^\eps_s)|{\cal F}_0] \big)^{1/2} .
\end{eqnarray*}
By Lemma \ref{lem:0} we obtain
\begin{eqnarray*}
\big| {\rm Cov} \big( G(Z^\eps_s) , G(Z^\eps_0) \big)\big|
&\leq &
\|G\|_\infty\|G'\|_\infty
\sigma^\eps_{s,\infty} .
\end{eqnarray*}
In view of Lemma \ref{lem:7} we then have
$$
\EE \big[ ( {\kappa}^\eps_t )^2\big] \leq C_T \eps \big(\eps +  \eps^{1-H} \big) 
\leq 2 C_T \eps^{2-H} ,
$$
uniformly in $t \in[0,T]$ and $\eps \in (0,1]$,
which gives the desired result.
\end{proof}

\begin{lemma}
\label{lem:7}%
Define 
\begin{eqnarray}\label{eq:sigi}
\sigma^\eps_{t,\infty} =  \sigma_{\rm ou} \left( \int_t^\infty  {\cal K}^\eps(s)^2 ds \right)^{1/2}   , 
\end{eqnarray}
Then there exists $C>0$ such that
\begin{eqnarray}\label{eq:sigi2}
\sigma^\eps_{t,\infty} \leq C \big( 1 \wedge (\eps/t)^{1-H}\big) .
\end{eqnarray}
\end{lemma}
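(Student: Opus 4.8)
The idea is to reduce the estimate to a tail integral of the \emph{fixed} kernel ${\cal K}$, and then to combine two elementary bounds: a global $L^2$ bound valid for all $t$, and a power-law tail bound coming from the large-time behaviour of ${\cal K}$.

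First I would use the self-similar form ${\cal K}^\eps(s)={\cal K}(s/\eps)/\sqrt{\eps}$ together with the substitution $u=s/\eps$ to write
$$
\int_t^\infty {\cal K}^\eps(s)^2\,ds = \int_{t/\eps}^\infty {\cal K}(u)^2\,du ,
$$
so that $\sigma^\eps_{t,\infty}=\sigma_{{\rm ou}}\big(\int_{t/\eps}^\infty {\cal K}(u)^2\,du\big)^{1/2}$. This shows that $a:=t/\eps$ is the only relevant parameter, and reduces the claim to the inequality $\int_a^\infty {\cal K}(u)^2\,du \le C'\,(1\wedge a^{2H-2})$ for some constant $C'$ and all $a>0$, from which (\ref{eq:sigi2}) follows with $C=\sigma_{{\rm ou}}\sqrt{C'}$.

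For the bound by a constant, property (i) of the kernel gives $\int_a^\infty {\cal K}(u)^2\,du \le \int_0^\infty {\cal K}(u)^2\,du = 1$, hence $\sigma^\eps_{t,\infty}\le\sigma_{{\rm ou}}$; this already settles the regime $a\le 1$, where $1\wedge a^{2H-2}=1$ since $2H-2<0$ for $H\in(0,1/2)$. For the regime $a\ge 1$ I would invoke property (iii): there exist $a_0\ge 1$ and a constant $C_0$ such that ${\cal K}(u)^2\le C_0\,u^{2H-3}$ for all $u\ge a_0$. Since $2H-3<-1$, integration gives $\int_a^\infty {\cal K}(u)^2\,du \le \frac{C_0}{2-2H}\,a^{2H-2}$ for $a\ge a_0$, while for $1\le a\le a_0$ one simply bounds the tail by $1\le a_0^{2-2H}a^{2H-2}$. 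Combining these yields $\int_a^\infty {\cal K}(u)^2\,du\le C'\,a^{2H-2}$ for all $a\ge 1$, and together with the constant bound this proves the stated inequality.

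The one point requiring a little care — rather than genuine difficulty — is turning the pointwise asymptotic in property (iii) into a tail bound $\int_a^\infty {\cal K}(u)^2\,du\le C\,a^{2H-2}$ that holds for \emph{all} $a\ge 1$ with a single constant $C$; as indicated, this is handled by splitting the integral at the threshold $a_0$ where the asymptotic regime kicks in, and using the integrability of ${\cal K}^2$ on the bounded remaining piece. Everything else is a routine change of variables and an explicit integration of a power.
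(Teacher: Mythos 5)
Your proof is correct and follows essentially the same route as the paper, whose own justification is the one-line remark that the claim follows from $|{\cal K}(s)| \leq K s^{H-\frac{3}{2}}$ for $s \geq 1$ together with ${\cal K}\in L^2(0,\infty)$. Your write-up simply makes explicit the change of variables $u=s/\eps$, the splitting at $a=t/\eps=1$ (and at the threshold $a_0$ where the asymptotic bound is valid), and the integration of the power $u^{2H-3}$, all of which is the intended argument.
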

\begin{proof}
This follows from
  $|{\cal K}(s)| \leq K s^{H-\frac{3}{2}}$ for $s \geq 1$ and ${\cal K}\in L^2$.
\end{proof}

\section{An Alternative Model}\label{sec:appB}
In \cite{comte,funahashi} the authors consider a stochastic volatility model
that is a kind of fractional Orstein-Uhlenbeck process, but
they consider the following representation of the fractional Brownian motion:
\begin{equation}
\label{def:WH0}
W^{H,0}_t=  \frac{1}{\Gamma(H+\frac{1}{2})} \int_0^t (t-s)^{H-\frac{1}{2}}dW^0_s ,
\end{equation}
where $(W^0_t)_{t \in \RR^+}$ is a standard Brownian motion over $\RR^+$.
$(W^{H,0}_t)_{t \in \RR^+}$ is a zero-mean self-similar Gaussian process, in the sense that 
$(\alpha^H W^{H,0}_{t/\alpha })_{t \in \RR^+} $ and $(W_t^{H,0})_{t \in \RR^+}$ have the same distribution,
but it is not stationary,  nor does it have stationary increments.
Its variance is 
$$
\EE\big[ (W_{t}^{H,0})^2 \big]=
 \frac{1}{2H \Gamma(H+\frac{1}{2})^2}  t^{2H},
$$
while the variance of its increment is (for $s>0$)~:
$$
\EE\big[ (W_{t+s}^{H,0}-W_t^{H,0})^2 \big] =
 \frac{1}{\Gamma(H+\frac{1}{2})^2} \Big[ \int_0^{t/s} \big( (1+u)^{H-\frac{1}{2}}-u^{H-\frac{1}{2}}\big)^2 du
 +\frac{1}{2H}\Big] {s}^{2H}  ,
$$
which has the following behavior
$$
\EE\big[ (W_{t+s}^{H,0}-W_t^{H,0})^2 \big] \stackrel{t \to+\infty}{\longrightarrow}
 \frac{1}{\Gamma(2H+1)\sin(\pi H)} {s}^{2H}  .
$$
This model is special because time zero plays a special role, and we think it is 
desirable to deal with the stationary situation addressed in this paper.
However, it turns out that the two models give the same result in the fast-varying case.
Indeed, the modified fOU process corresponding to (\ref{def:WH0})
is (to be compared with~(\ref{eq:fOU})):
\begin{eqnarray}
\nonumber
Z^{\eps,0}_t &=& Z_0 e^{-t/\eps} +\eps^{-H} \int_0^t e^{-\frac{t-s}{\eps}} dW^{H,0}_s\\
&=& Z_0 e^{-t/\eps} +\eps^{-H} W^{H,0}_t - \eps^{-H-1} 
\int_0^t  e^{-\frac{t-s}{\eps}} W^{H,0}_s ds ,
\end{eqnarray}
where $Z_0$ is considered as a constant as in \cite{comte,funahashi}.
In terms of the Brownian motion $W^0_t$ this reads:
\begin{equation}
Z^{\eps,0}_t =Z_0 e^{-t/\eps}+ \sigma_{{\rm ou}} \int_0^t {\cal K}^\eps(t-s) dW^0_s,
\end{equation}
where ${\cal K}^\eps$ is defined in (\ref{def:Keps}).
It is a Gaussian process with the following covariance ($t,s\geq0$):
$$
{\rm Cov} \big( Z^{\eps,0}_t , Z^{\eps,0}_{t+s}  \big) =
\sigma^2_{{\rm ou}} {\cal C}_{t/\eps}^0 \Big(\frac{s}{\eps}\Big) ,
$$
that is a function of $t/\eps$ and $s/\eps$ with
$$
{\cal C}_t^0(s) = 
\frac{\int_0^t {\cal K}(u) {\cal K}(u+s)du}{\int_0^\infty {\cal K}(u)^2du}.
$$
Note that
$$
{\cal C}_t^0(s) \stackrel{t \to +\infty}{\longrightarrow} \frac{\int_0^\infty {\cal K}(u) {\cal K}(u+s)du}{\int_0^\infty {\cal K}(u)^2du} = 
{\cal C}_Z(s) ,
$$
with ${\cal C}_Z$ defined by (\ref{def:calCZ}).
In other words, except for a small period of time just after time $0$ which is of duration of the order of $\eps$, 
the modified process has the same behavior as the one introduced in this paper.
One can then check the detailed calculations carried out in this paper 
and find that Proposition \ref{prop:main}
still holds true with the modified model $Z^{\eps,0}_t$.

\end{document}